\documentclass[journal]{IEEEtran}
\usepackage{amsmath,amssymb,amsfonts}
\usepackage{graphicx}
\usepackage{booktabs}
\usepackage{multirow}
\usepackage{array}
\usepackage{makecell}
\usepackage{tabularx}
\usepackage{algorithm}
\usepackage{algpseudocode}
\usepackage{url}
\usepackage{xcolor}
\usepackage{cite}
\usepackage{stfloats}
\usepackage{amsthm}
\usepackage{adjustbox}
\usepackage{microtype}
\usepackage{enumitem}

\newcommand{\algname}{SIP-MOSP}
\newcommand{\algbm}{SIP-MOSP-BM}
\newcommand{\algst}{SIP-MOSP-ST}
\newcommand{\cost}{\mathbf{c}}
\newcommand{\gvec}{\mathbf{g}}
\newcommand{\fvec}{\mathbf{f}}
\newcommand{\hvec}{\mathbf{h}}
\newcommand{\R}{\mathbb{R}}
\newcommand{\front}{\mathcal{F}}
\newcommand{\snap}{\mathcal{S}}
\newcommand{\base}{\mathcal{B}}
\newcommand{\deltaf}{\Delta}
\newcommand{\idx}{\mathcal{I}}
\newcommand{\nil}{\bot}

\newcommand{\dom}{\preceq}

\newtheorem{theorem}{Theorem}
\newtheorem{lemma}{Lemma}

\begin{document}

\title{Asynchronous Parallel Search for Exact Multi-Objective Shortest Paths with Versioned Frontier Snapshots and Indexed Dominance Pruning}

\author{Xiaoqing Xu,~Ning Zhang,~Liuyihui Qian,~Xiaojun Liu,~Juan Wu,~and~Hong Tang%
\thanks{The authors are with China Telecom Research Institute, Guangzhou, China.}%
\thanks{Corresponding authors: Hong Tang and Juan Wu (e-mail: tangh@chinatelecom.cn, wuj55@chinatelecom.cn).}%
}

\markboth{Preprint}%
{Xu \MakeLowercase{\textit{et al.}}: Asynchronous Parallel Search for Exact Multi-Objective Shortest Paths}

\maketitle

\begin{abstract}
Exact multi-objective shortest-path (MOSP) search computes the complete Pareto set between specified start and goal vertices, and its computational cost can grow rapidly with expanding nondominated label sets and frequent dominance tests over per-vertex Pareto frontiers. Efficiently parallelizing exact MOSP remains an open challenge. This paper presents \algname{} (Snapshot-based Indexed-Pruning MOSP), an asynchronous exact framework that separates label expansion from frontier maintenance within a single cooperative search. SIP-MOSP combines immutable versioned frontier snapshots with indexed dominance pruning, enabling concurrent label processing without concurrent access to the same mutable frontier. Together, these mechanisms reduce synchronization overhead and accelerate dominance testing. We instantiate the framework with block-minimum (\algbm{}) and segment-tree-minimum (\algst{}) indices and prove exactness. We evaluate both variants against four state-of-the-art exact MOSP baselines covering sequential and parallel search. Experiments across multiple objective dimensions on a road network, an Internet service provider topology, and an 180-vertex complete directed graph show that SIP-MOSP achieves speedups of up to 46.9$\times$ over the best-performing sequential baseline and up to 7.05$\times$ over the best-performing parallel baseline on mutually solved instances. In the 20-objective complete-graph setting, where many instances remain unsolved by the sequential baselines within one hour, SIP-MOSP-ST achieves a 3.34$\times$ speedup while reducing peak memory by a factor of 60.3 relative to the best-performing parallel baseline. These results demonstrate that SIP-MOSP is an efficient shared-memory framework for exact MOSP across structurally diverse graph topologies.
\end{abstract}

\begin{IEEEkeywords}
Multi-objective shortest path, multi-objective search, parallel graph algorithms, shared-memory parallelism, Pareto frontier, dominance pruning.
\end{IEEEkeywords}

\section{Introduction}
\IEEEPARstart{M}{ulti-objective} shortest path (MOSP) search arises when a route must be evaluated by several criteria rather than by a single scalar metric. In transportation networks, these criteria may include travel time, distance, exposure, and the number of intersections~\cite{Bronfman2015,Funke2017}. In communication networks, routing objectives may include delay, bandwidth-related capacity, reliability, congestion exposure, energy, and economic cost. To accommodate such diverse requirements, multi-criteria routing protocols and architectures have been developed~\cite{Sobrinho2020,Tabaeiaghdaei2025}. Prior work on telecommunication network design has likewise emphasized that heterogeneous QoS and technical--economic metrics often motivate explicit multicriteria routing models rather than a priori scalarization~\cite{Craveirinha2024,Chen2015}. A scalar shortest path is computationally convenient, but it hides trade-offs that can be essential for path computation and service placement.

Exact MOSP search is difficult because it must preserve every non-dominated trade-off~\cite{Breugem2017,Serafini1987}. The number of Pareto-optimal start--goal cost vectors may grow exponentially with the number of objectives and the graph size. Moreover, the difficulty is not limited to the final output size. During the search, an exact solver maintains a nondominated label frontier at each reached vertex. Labels at non-goal vertices represent alternative start-to-vertex paths, whereas the frontier at the goal contains the currently known nondominated solutions. As these sets grow, dominance queries, insertion of surviving labels, and deletion of dominated labels become major computational costs. Classical algorithms, including Martins' method~\cite{Martins1984}, expose this frontier-growth difficulty. Modern one-to-one solvers improve practical performance through admissible heuristics, dimensionality reduction, lazy checks, and specialized frontier structures, including NAMOA*~\cite{Mandow2010}, NAMOA*dr~\cite{Pulido2015}, EMOA*~\cite{Ren2022}, LTMOA*/LazyLTMOA*~\cite{Hernandez2023}, NWMOA*~\cite{Ahmadi2024}, and T-MDA~\cite{Maristany2023}.

Parallelizing this expensive search introduces two additional challenges: many exact best-first MOSP algorithms rely on globally ordered work selection, while concurrent workers must coordinate access to evolving Pareto frontiers. Classical label-setting algorithms repeatedly extract a globally minimal or globally Pareto-optimal label from an ordered OPEN structure. Removing several labels solely according to a scalar total order can destroy the label-setting property, whereas a fully multidimensional Pareto queue is not known to have an efficient general implementation for three or more objectives~\cite{Sanders2013}. Existing parallel designs therefore expose concurrency in different ways. Sanders and Mandow~\cite{Sanders2013} process globally Pareto-optimal batches and parallelize node-local operations, with their most efficient Pareto-queue realization specialized to two objectives. Parallel MOA* runs complete searches under different objective orders and shares upper bounds, but each search retains its own open lists and frontier state~\cite{Ahmadi2025}. SOPMOA* lets multiple sub-searchers cooperate through one shared OPEN queue and shared per-vertex Pareto fronts, while locks and shared/exclusive locking coordinate frequent queue and frontier accesses~\cite{Truong2025}. OPMOS~\cite{Gold2025} processes multiple order-eligible labels concurrently, and MPMOS~\cite{Gold2026MPMOS} extends this ordered-search model to GPU-scale batches. These methods respectively rely on Pareto-optimal batches, replicated order-specific searches, synchronized shared mutable state, or specialized ordered execution. In a different problem setting, DynaMOSP performs parallel heuristic path updates after changes in large dynamic networks, but returns one user-preference-guided Pareto-optimal or suboptimal path rather than the complete Pareto frontier~\cite{Shovan2025}. The complementary question addressed here is whether one cooperative exact search can exploit concurrent label expansion and parallel maintenance of different vertex frontiers while preventing concurrent writes to the same frontier.

Our earlier DAMPC work~\cite{Xu2025DAMPC} separated candidate expansion from frontier admission through an asynchronous two-stage pipeline. DAMPC showed that assigning graph vertices to fixed update threads avoids conflicting writes to the same vertex frontier, but it relied on a centralized dispatcher and did not publish frontiers as versioned, indexed, read-optimized snapshots. Its published state also did not distinguish a stable indexed portion from recent additions. The present work retains the useful expansion/update separation but redesigns communication and frontier management for exact goal-directed MOSP.

We present \algname{}, an asynchronous shared-memory framework for exact additive MOSP. Search workers and update workers jointly advance one search while performing complementary tasks. Search workers expand accepted labels, generate successor candidates, and retain local best-first order in worker-local priority queues. Update workers validate candidates and maintain the frontiers of assigned intermediate vertices; a dedicated goal-update worker maintains the goal frontier. Each worker receives concurrent arrivals through a worker-specific incoming queue and alone transfers them into its local priority queue. Consequently, multiple producers can deliver labels concurrently, while only the receiving worker modifies the priority queue that determines its local processing order. A search worker routes each candidate directly to the update worker assigned to the candidate label's endpoint, removing the centralized dispatcher used by DAMPC.

To separate read-intensive pruning from mutable updates, every maintained frontier has two representations. The assigned update worker modifies a \emph{canonical frontier}, defined as the exact current nondominated set. Search workers never read this mutable set directly. Instead, the update worker publishes immutable, versioned snapshots. A large snapshot consists of a consolidated lexicographically ordered base, a small recent delta containing additions since the last consolidation, and an index built over the base. The delta makes a newly accepted vector visible immediately, while base reconstruction is deferred. A candidate label that survives a snapshot query carries the observed version to the assigned update worker. If the version is unchanged, the previous negative result, meaning that no snapshot vector weakly dominates the candidate label, remains valid and the duplicate query is skipped. Otherwise, the current snapshot is queried again.

We develop two exact snapshot indexes. \algbm{} stores component-wise minima for fixed-size blocks of the base. \algst{} stores the same type of minima in a segment tree over base intervals. Both first restrict the lexicographically ordered base to the prefix that can contain a dominator and then use minima only to discard impossible ranges; all surviving candidates are tested by exact vector dominance. Thus, the indexes change the amount of work, not the semantics of pruning.

The contributions are threefold.
\begin{itemize}[leftmargin=*]
    \item We introduce an asynchronous exact MOSP architecture in which search workers and update workers cooperate within one search. Worker-specific incoming queues separate concurrent label delivery from local best-first processing, direct endpoint-owner delivery removes centralized redistribution, and assigning every vertex to one update worker prevents concurrent updates to the same frontier while allowing different vertex frontiers to be updated in parallel.
    \item We develop versioned base--delta frontier snapshots and two exact dominance indexes. The design publishes recent frontier information without rebuilding a large index after every insertion, supports block-minimum and segment-tree-minimum pruning with exact verification, and safely reuses a negative snapshot result when its version remains current.
    \item We establish exactness under asynchronous expansion and evaluate the framework against four exact baselines on three topology families. The experiments quantify solve time and memory, isolate the contribution of each mechanism, and examine thread scaling and processor utilization.
\end{itemize}

The remainder of the paper is organized as follows. Section~\ref{sec:related} reviews exact and parallel MOSP search and dominance checking. Section~\ref{sec:problem} defines the problem. Section~\ref{sec:framework} presents \algname{}. Section~\ref{sec:correctness} gives correctness and complexity arguments. Section~\ref{sec:evaluation} reports the experiments, and Section~\ref{sec:conclusion} concludes.

\section{Related Work}
\label{sec:related}

\subsection{Exact Multi-Objective Shortest Path Search}

Exact MOSP algorithms compute the complete set of nondominated start--goal cost vectors and representative paths rather than a single scalarized route~\cite{Garroppo2010,Salzman2023Survey,Salzman2026Emerging}. Classical label-setting and label-correcting algorithms, represented by Martins' algorithm~\cite{Martins1984} and related methods~\cite{Skriver2000,Guerriero2001}, establish the basic rule used by later methods: a label can be discarded only if another feasible label reaching the same vertex is no worse in every objective. Goal-directed algorithms add lower bounds to prioritize labels and to prune labels whose best possible completion is already dominated.

NAMOA* formalized multi-objective A* search with consistent vector heuristics~\cite{Mandow2010}. NAMOA*dr showed that lexicographic ordering allows dimensionality reduction in local dominance checks~\cite{Pulido2015}. EMOA* organizes projected frontiers in balanced binary search trees~\cite{Ren2022}. LTMOA* and LazyLTMOA* show that contiguous arrays and delayed checks can outperform more complex structures in practice~\cite{Hernandez2023}. NWMOA* stores truncated frontiers in lexicographic order, uses early termination of frontier scans, and supports bounded negative-weight instances~\cite{Ahmadi2024}. T-MDA provides another efficient targeted multiobjective label-setting design~\cite{Maristany2023}. These methods primarily optimize the amount and cost of sequential frontier work. \algname{} instead asks how the same exact work can be organized for high-throughput shared-memory execution without making each mutable frontier a concurrent data structure.

\subsection{Parallel MOSP}

Sanders and Mandow~\cite{Sanders2013} introduced a parallel label-setting method that scans subsets of globally Pareto-optimal labels and parallelizes candidate generation, grouping, and node-local merging. Their analysis is particularly strong for two objectives, where Pareto queues and ordered local sets admit efficient geometric operations. For $d\geq3$, the paper notes that an efficient Pareto queue is not known, limiting the appeal of the fully parallel formulation.

Parallel MOA*~\cite{Ahmadi2025} executes one complete MOA*-style search for each selected objective order and shares upper-bound information. In the published implementation, $d$ cyclic orders are mapped to $d$ CPU threads. This portfolio-style design is effective when different orders expose complementary solutions early, but each search maintains a separate open list and node-frontier state. Its natural concurrency therefore follows the number of selected orders.

SOPMOA*~\cite{Truong2025} follows a shared-state design. Multiple sub-searchers remove labels from one shared OPEN queue and query or update shared per-vertex Pareto fronts. Locks protect OPEN and the solution set, while shared/exclusive locking coordinates frontier queries and updates. The design avoids replicating complete searches, but synchronization remains on queue and frontier operations that occur repeatedly throughout the search. In contrast, \algname{} uses worker-local priority queues for processing order, immutable snapshots for read-side pruning, and assigns each vertex frontier to one update worker.

OPMOS~\cite{Gold2025} couples an ordered search algorithm with specialized architecture so that several eligible path extensions can be processed concurrently while retaining the ordering discipline needed for work-efficient search. MPMOS~\cite{Gold2026MPMOS} builds on this direction with a GPU-based massively parallel architecture.

DynaMOSP~\cite{Shovan2025} addresses incremental path recomputation in large fully dynamic networks. Its parallel heuristic update returns one route selected according to user preferences, which may be Pareto-optimal or suboptimal, rather than enumerating the complete cost-unique Pareto frontier. It therefore solves a different problem from the exact complete-frontier search considered here.

DAMPC~\cite{Xu2025DAMPC} decouples expansion and path-set updates and assigns node partitions to update threads. That work established the usefulness of read/update specialization, but it relied on centralized dispatch and did not provide versioned indexed frontier snapshots. In particular, it lacked (i) a read-only frontier representation suitable for lock-free pruning, (ii) a base--delta policy that makes new vectors visible without rebuilding a large index, and (iii) validation reuse based on snapshot versions. \algname{} also replaces centralized dispatch with direct endpoint-owner delivery. The combination of a multi-producer/single-consumer incoming queue and a private best-first queue is central to this change: many threads can append labels to an update worker's incoming queue, while only that worker inserts them into its ordered queue and modifies its assigned frontiers.

\subsection{Dominance Checking and Frontier Pruning}

Dominance checking is among the most frequent operations in exact MOSP. A generated label can be screened against the nondominated costs already stored for its endpoint before admission. Separately, its lower-bound vector can be screened against the goal's Pareto frontier before expansion; if a complete solution weakly dominates that lower bound, no continuation can yield a new Pareto-optimal goal cost. Arrays, lists, balanced trees, sorted frontiers, quick checks, and SIMD layouts have all been proposed to reduce this cost~\cite{Ren2022,Hernandez2023,Hernandez2024SIMD,Ahmadi2024}.

Concurrency changes the cost model. If many threads directly query and modify the same mutable frontier, exact reads must be coordinated with insertions and deletions, either by locks that can delay readers during updates or by a substantially more complex concurrent data structure. \algname{} avoids these alternatives. Updates are exclusive at the vertex-assignment level, whereas queries use immutable snapshots. The published snapshot may retain vectors that have since become dominated, but such vectors remain safe for positive pruning because dominance is transitive. An older snapshot may miss a newly available pruning opportunity, but it cannot create an invalid solution. Version validation addresses the only case in which a previously negative result is reused for admission.

\section{Problem Definition and Preliminaries}
\label{sec:problem}

\subsection{Exact Additive MOSP}

Let $G=(V,E)$ be a finite directed graph. Each edge $e\in E$ has a nonnegative $d$-dimensional cost vector
\[
    \cost(e)=(c_1(e),\ldots,c_d(e))\in\R_{\geq0}^{d}.
\]
For a path $\pi$, its cost is the component-wise sum
\[
    \cost(\pi)=\sum_{e\in\pi}\cost(e).
\]
For vectors $a,b\in\R^d$, $a\dom b$ denotes weak dominance: $a_i\le b_i$ for all $i$. Strict Pareto dominance additionally requires $a\ne b$. Given start vertex $s$ and goal vertex $t$, the objective is to return one representative path for every distinct cost vector that is not strictly dominated by another $s$--$t$ path. Duplicate cost vectors are represented once.

\subsection{Labels and Consistent Lower Bounds}

A label $x$ represents a concrete $s$--$v(x)$ path and stores
\[
 x=\big(v(x),\gvec(x),\fvec(x),\operatorname{parent}(x),\nu_v(x),\nu_t(x)\big),
\]
where $\gvec(x)$ is the accumulated path cost, $\hvec(v)$ is a lower bound from $v$ to $t$, and
\[
    \fvec(x)=\gvec(x)+\hvec(v(x)).
\]
The parent reference reconstructs a path. The two version fields record the endpoint and goal snapshots observed when the label was generated.

The lower bounds are computed independently for each objective. For objective $i$, Dijkstra's algorithm is run from $t$ on the reverse graph using scalar edge cost $c_i$. The result $h_i(v)$ is the exact single-objective distance from $v$ to $t$ under objective $i$. Nonnegative edge costs imply consistency,
\[
    h_i(u)\le c_i(u,v)+h_i(v), \qquad (u,v)\in E,
\]
and therefore $\hvec$ is an admissible vector lower bound. If an existing goal vector $z$ satisfies $z\dom\fvec(x)$, every completion of $x$ is dominated and $x$ is safely pruned.

Each search worker uses one cyclic objective order $\pi_r$ and orders its local priority queue lexicographically by $\fvec$ under that order. Lexicographic label orders are standard in exact label-setting search, and parallel MOSP work has shown that several eligible labels can be processed concurrently when admission remains exact~\cite{Sanders2013,Ahmadi2025}. Different workers may therefore use different cyclic orders. Exactness does not require a globally unique minimum in \algname{} because labels are admitted only after exact endpoint-frontier validation and every admitted non-goal label is eventually expanded.

\subsection{Canonical Frontiers and Published Snapshots}

For each vertex $v$, let $\front_v$ denote its \emph{canonical frontier}: the exact mutable set of currently nondominated cost vectors accepted for $v$, together with representative labels. The term canonical distinguishes this set from read-only copies used for pruning. Each non-goal vertex is assigned to exactly one update worker, and only that worker modifies $\front_v$. The goal frontier $\front_t$ is modified only by the dedicated goal-update worker.

Search workers query a published snapshot rather than $\front_v$ directly. Snapshot version $k$ has the form
\[
    \snap_v^k=(\base_v^k,\deltaf_v^k,\idx_v^k,k).
\]
The \emph{base} $\base_v^k$ is a consolidated immutable array sorted in lexicographic order. The \emph{delta} $\deltaf_v^k$ is a small immutable list of vectors accepted after the base was last consolidated. Here, delta means recent frontier additions, not a numerical difference between cost vectors. The index $\idx_v^k$ summarizes only the base. Every current canonical vector appears in $\base_v^k\cup\deltaf_v^k$, although a snapshot may also contain older vectors that have since become dominated. Hence,
\[
    \front_v \subseteq \base_v^k\cup\deltaf_v^k.
\]
This conservative-superset property makes any positive snapshot dominance result safe.

\section{The \algname{} Framework}
\label{sec:framework}

\subsection{Asynchronous Search and Exclusive Frontier Updates}

Fig.~\ref{fig:architecture} summarizes the computation. The framework uses $R$ search workers, $P$ update workers for non-goal vertices, and one dedicated goal-update worker. The non-goal vertices are partitioned into $P$ ownership sets $V_1,\ldots,V_P$; update worker $p$ is responsible for every canonical frontier $\front_v$ with $v\in V_p$. Every worker has two local structures. A worker-specific incoming queue accepts concurrent label arrivals from other threads and has one consumer. The worker periodically transfers these arrivals into a private priority queue. Only the worker modifies that priority queue, so concurrent senders never perform heap insertions and cannot conflict on the ordered structure.

A search worker removes an accepted label from its best-first queue, expands all outgoing edges, and applies read-only snapshot pruning. If a candidate label $y$ ends at non-goal vertex $v$, it is appended directly to the incoming queue of update worker $\operatorname{owner}(v)$. If $v=t$, it is sent to the goal-update worker. The combination of a multi-producer incoming queue and a worker-private best-first queue enables this \emph{direct owner delivery}: many search workers may deliver labels concurrently, while only the receiving update worker restores local order and modifies $\front_v$. The centralized redistribution stage used by DAMPC is therefore unnecessary, and useful frontier information can be admitted and published with one less queueing stage.

When an update worker accepts a non-goal label, it selects a search worker using the power-of-two-choices strategy~\cite{Azar1994}. Two candidate search workers are obtained from deterministic hashes of the endpoint, their reported queue loads are read, and the label is sent to the less loaded one. This constant-size choice provides load balancing without scanning all search workers and preserves some endpoint locality. The framework does not depend on this particular policy; another dynamic load-balancing rule can be substituted without changing frontier ownership or correctness. The goal-update worker does not re-expand accepted goal labels; it stores one representative path for each accepted goal cost and publishes the updated goal snapshot.

\begin{figure*}[!t]
    \centering
    \includegraphics[width=0.98\textwidth]{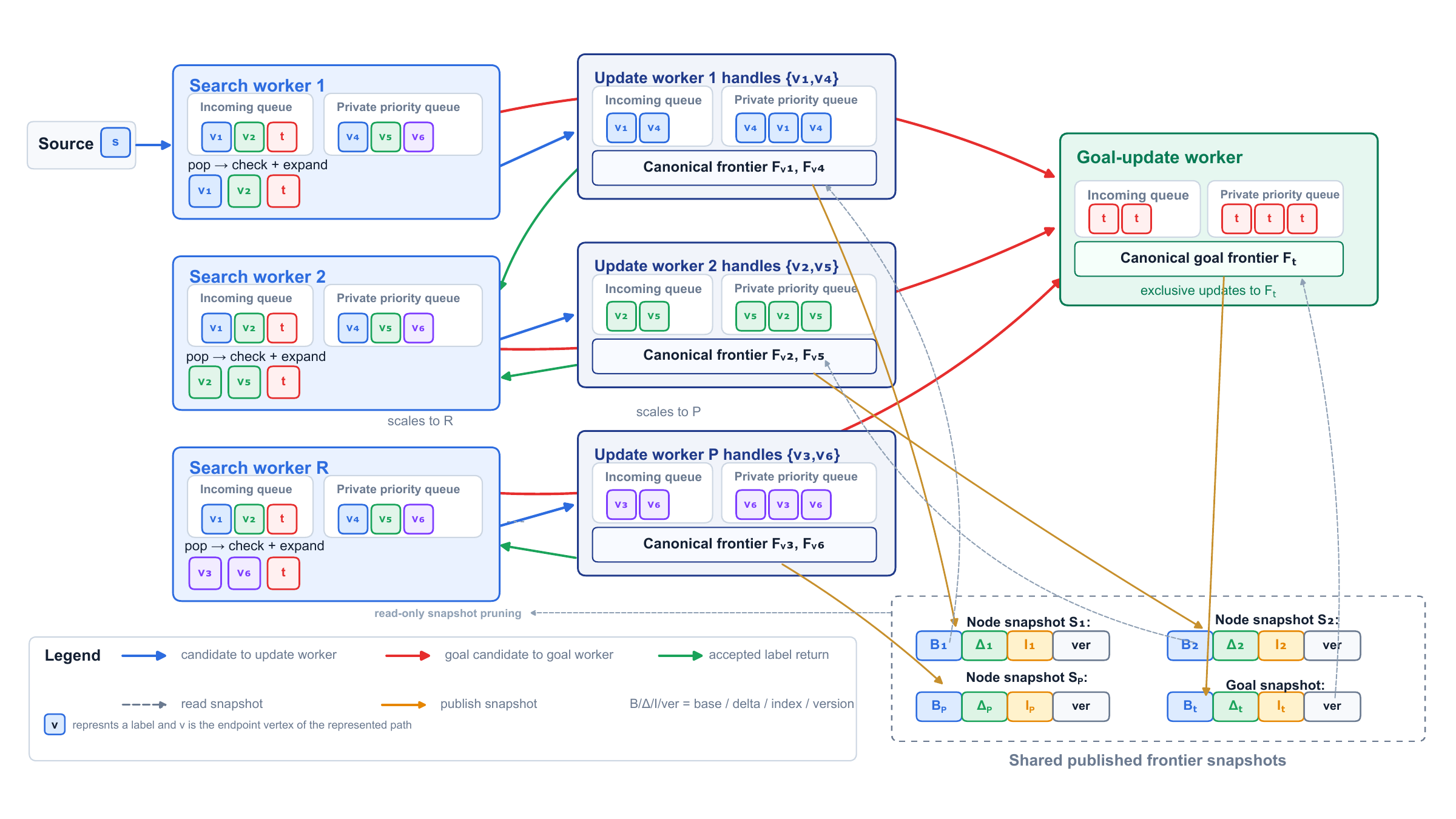}
    \caption{Architecture of \algname{}. Each small label box is identified by its endpoint; in particular, the symbol $v$ denotes the terminal vertex of the path represented by that label. A complete label also stores accumulated and lower-bound costs, a parent reference, and the snapshot versions observed during pruning. Each worker receives concurrent arrivals through a single-consumer incoming queue and alone modifies its private priority queue. Search workers send candidates directly to the update worker assigned to the endpoint. Solid arrows transfer labels, and dashed arrows denote immutable snapshot reads.}
    \label{fig:architecture}
\end{figure*}

\subsection{Versioned Base--Delta Snapshots}

Publishing a fully rebuilt indexed copy after every accepted vector would make snapshot construction proportional to the entire frontier size. \algname{} instead separates stable and recent state. If the frontier is small, the update worker publishes a consolidated base-only snapshot because copying and linear scanning are inexpensive. Once the frontier is large enough to index, the existing base and index can be shared by the next snapshot version, while the newly accepted vector is appended to the delta. The update worker consolidates again when the delta reaches an adaptive capacity or when the frontier has grown sufficiently relative to the base.

This organization serves two purposes. First, it keeps publication timely: a newly accepted vector enters the next immutable snapshot immediately and can prune labels already waiting in search queues. Second, it amortizes base copying and index construction over multiple accepted vectors. During the interval between consolidations, vectors dominated by newer additions may remain in the base or delta. Retaining them is safe. If an old vector $a$ dominates a candidate $q$, and a newer canonical vector $b$ dominates $a$, then $b\dom a\dom q$; the candidate is still safely rejected. The only effect of stale dominated vectors is extra query work.

The delta is queried before the base because recent vectors frequently invalidate queued or newly generated labels. Each delta stores a component-wise minimum vector $m_{\Delta}$. If $m_{\Delta,j}>q_j$ for any dimension $j$, no delta vector can dominate $q$, and the entire delta scan is skipped. If this inexpensive necessary condition does not rule out dominance, the delta is scanned exactly.

\subsection{Indexed Dominance Pruning}
\label{subsec:index}

After a delta miss, the query turns to the lexicographically ordered base. Let $q$ be the candidate vector and let
\[
 u=\max\{i: b_i\preceq_{\mathrm{lex}}q\},
\]
where $b_1,\ldots,b_n$ are the base vectors. The value $u$ is found by binary search. Every possible dominator lies in $b_1,\ldots,b_u$: a vector lexicographically larger than $q$ exceeds $q$ in its first differing dimension and therefore cannot weakly dominate it. Lexicographically ordered early termination is also exploited by NWMOA*~\cite{Ahmadi2024}; \algname{} applies the property to immutable base snapshots and adds range-minimum indexes over the feasible prefix.

For every indexed range $R$, the summary stores
\[
    m_j(R)=\min_{b\in R} b_j, \qquad j=2,\ldots,d.
\]
The first component need not be summarized because $b_1\le q_1$ already holds in the feasible lexicographic prefix. A range is impossible when $m_j(R)>q_j$ for at least one summarized dimension. The converse is not sufficient because the component minima may come from different vectors, so every vector in a surviving range is still checked exactly. The two indexes adapt standard blocked range summaries and segment-tree range decomposition~\cite{Bender2000,deBerg2008} to this necessary-condition test.

\paragraph{Block-minimum index.}
\algbm{} partitions the ordered base into fixed-size contiguous blocks of $b$ vectors. For each block and each truncated dimension $j=2,\ldots,d$, it stores the smallest component value appearing anywhere in that block. Query processing visits only blocks that intersect the feasible prefix. If one stored block minimum exceeds the corresponding component of $q$, every vector in that block fails dominance in that dimension and the whole block is skipped. Otherwise, the query performs exact dominance tests on the vectors of that block that lie inside the prefix. The index occupies $O((d-1)\lceil n/b\rceil)$ values and is traversed sequentially. Smaller blocks provide finer pruning but more summaries, whereas larger blocks reduce metadata and may require more exact comparisons.

\paragraph{Segment-tree-minimum index.}
\algst{} builds a complete binary segment tree over the ordered base. Every leaf represents one base vector, and every internal node represents the union of the consecutive intervals covered by its children. Each node stores the $d-1$ component minima of its interval. Query processing begins at the root, ignores nodes outside the feasible prefix, and prunes an entire represented interval when any stored minimum exceeds the corresponding query component. A surviving internal node is recursively refined into its children, and an exact dominance test is performed only at surviving leaves. Compared with fixed blocks, the tree can eliminate intervals at multiple granularities, including large prefixes near the root and smaller subranges deeper in the tree. This finer pruning requires $O((d-1)n)$ summary values and a less sequential traversal.

Fig.~\ref{fig:snapshot-query} gives a concrete three-dimensional example. The published snapshot contains the ordered base $b_1,\ldots,b_8$, an index over that base, and a recent delta. For $q=(7,6,6)$, the delta is checked first and is assumed not to contain a dominator. Binary search then excludes $b_7=(8,1,1)$ and $b_8=(9,0,0)$ because both are lexicographically larger than $q$, leaving six possible base vectors. Without an index, the algorithm must perform an exact dominance comparison with each of $b_1,\ldots,b_6$. With block size two, the first block has a summarized component minimum that exceeds $q$ in one dimension, and the second block fails in another dimension; both blocks are eliminated without inspecting their four vectors. Only $b_5$ and $b_6$ are compared exactly, and $b_6=(7,5,5)$ dominates $q$. The segment tree reaches the same result by eliminating interval $[1,4]$ at an internal node, descending into interval $[5,6]$, and ignoring $[7,8]$ because it lies outside the feasible prefix. The example therefore reduces six exact base comparisons to two. On large Pareto frontiers, a single block or tree-node test can exclude tens, hundreds, or thousands of vectors, so the avoided exact comparisons accumulate over the many dominance queries performed by the search. Neither index changes the final dominance predicate.

\begin{figure*}[!t]
    \centering
    \includegraphics[width=0.98\textwidth]{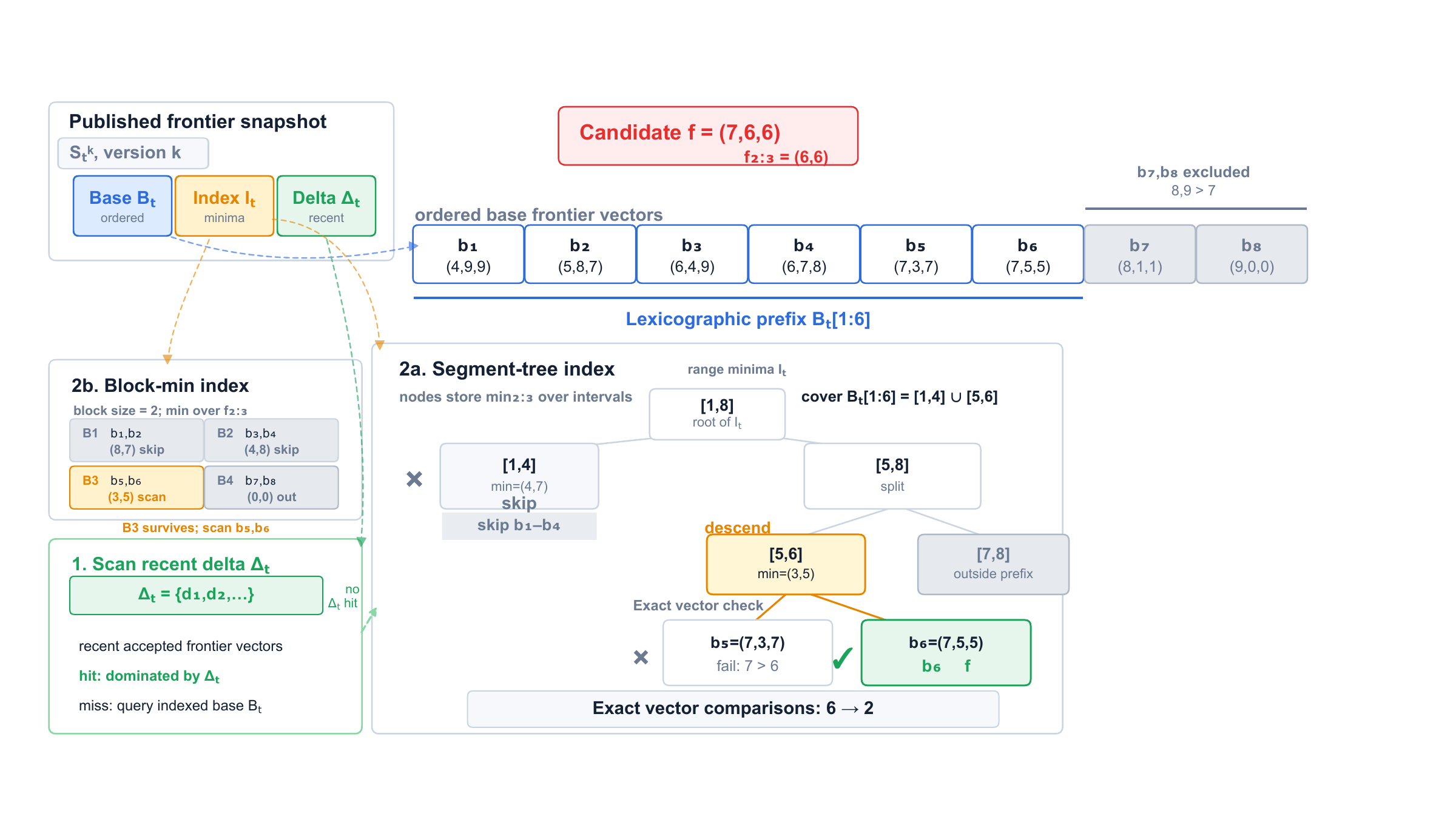}
    \caption{Dominance query over a published versioned frontier snapshot. The snapshot contains a recent delta and an ordered consolidated base; the base is queried through either a block-minimum index or a segment-tree-minimum index. In the example, the lexicographic prefix removes two vectors and either index reduces the remaining six exact base comparisons to two.}
    \label{fig:snapshot-query}
\end{figure*}

\subsection{Version Reuse}
\label{subsec:version}

A \emph{negative dominance result} means that a snapshot query returns false: no vector represented by that immutable snapshot weakly dominates the candidate. The search worker therefore retains and forwards the candidate rather than pruning it. However, this result may become stale before the candidate is processed by its update worker if the corresponding frontier changes in the meantime.The update worker reuses the negative result only if the observed snapshot version is still current; otherwise, it repeats the dominance query against the latest snapshot.

\algname{} attaches the observed version $\nu$ to every surviving candidate. When the assigned update worker receives the candidate, it compares $\nu$ with the current snapshot version. If they are equal, the queried immutable set is unchanged, so the negative result remains valid and the update worker need not traverse the delta and base a second time. If the version differs, the update worker queries the current snapshot. A positive result rejects the candidate; a negative result admits it to the canonical-frontier update. The same rule is applied by the goal-update worker to complete labels.

Version reuse shortens the common path in which a candidate label moves quickly from a search worker to its endpoint update worker and no intervening frontier update occurs. It is especially useful when snapshot queries are expensive, frontiers are large, or update-queue delay is short enough that the observed version often remains current. The implementation records the version examined by the search worker in the candidate label. At the update worker, a version mismatch causes the dominance test to be repeated against the current snapshot before the canonical frontier is modified. Thus, frequent update traffic does not weaken correctness; it simply reduces the reuse rate and causes more exact revalidations. Fig.~\ref{fig:version-reuse} illustrates the two cases.

\begin{figure*}[!t]
    \centering
    \includegraphics[width=0.93\textwidth]{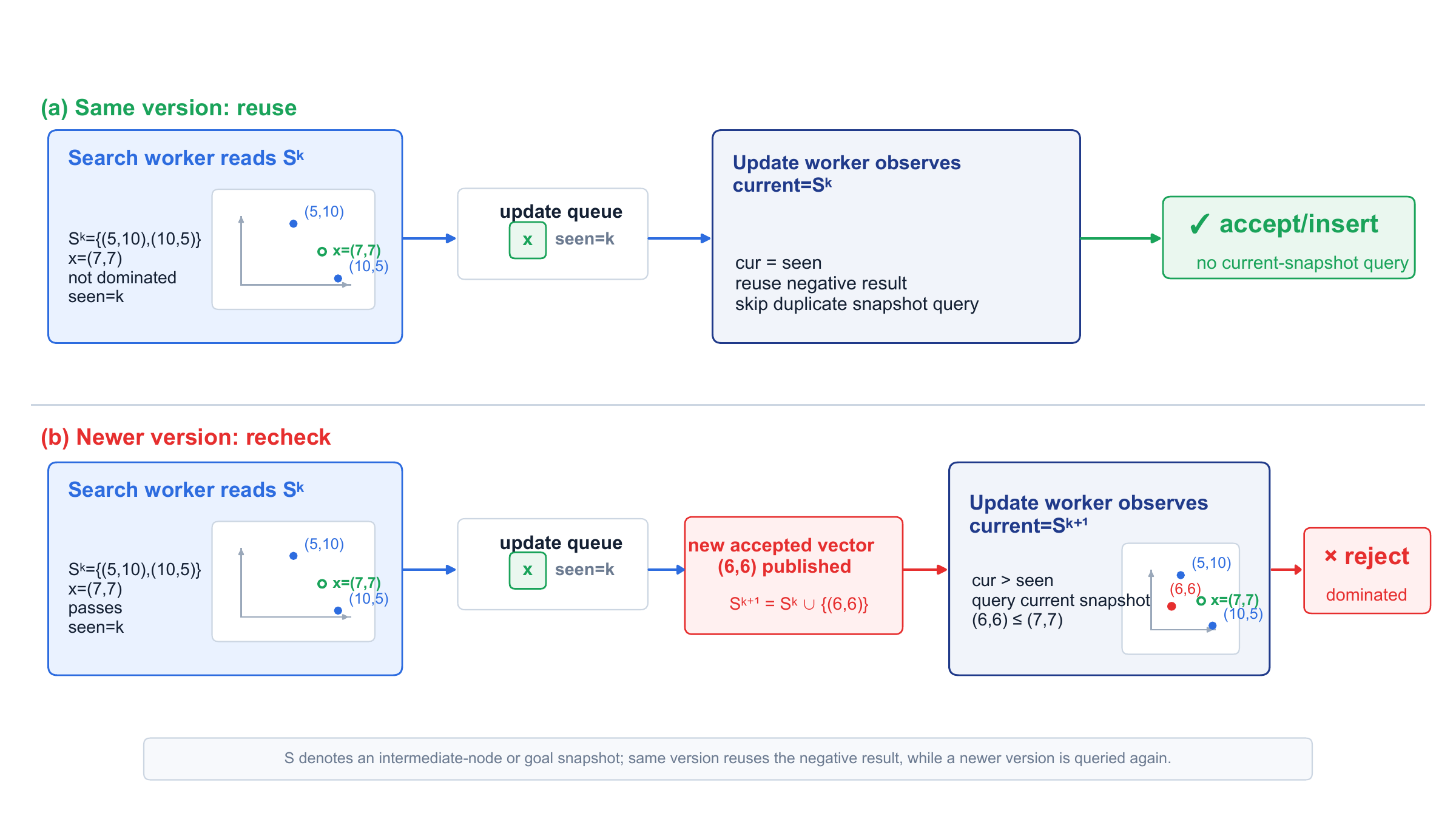}
    \caption{Version reuse after a negative dominance result. An unchanged snapshot version certifies that the search worker and the update worker refer to the same immutable set, so the duplicate query is skipped. A changed version triggers an exact query against the current snapshot.}
    \label{fig:version-reuse}
\end{figure*}

\subsection{Core Algorithms}

Algorithms~\ref{alg:main}--\ref{alg:index} connect the preceding components into the complete search. Algorithm~\ref{alg:main} performs the one-time initialization: it computes one reverse single-objective distance function per objective, assigns every non-goal vertex to one update worker, publishes empty snapshots, submits the start label, launches the two worker loops, and waits for global completion. The incoming queues in the pseudocode support concurrent append and one consumer, whereas each local priority queue is modified only by its worker.

\begin{algorithm}[!t]
\caption{High-Level \algname{} Search}
\label{alg:main}
\begin{algorithmic}[1]
\Require $G=(V,E)$, costs $\cost$, start $s$, goal $t$, $R$ search workers, $P$ non-goal update workers
\Ensure One representative path for every cost-unique Pareto-optimal $s$--$t$ vector
\For{$i\gets1$ to $d$}
    \State Run scalar Dijkstra from $t$ on the reverse graph under $c_i$
    \State Set $h_i(v)$ to the resulting distance for every $v\in V$
\EndFor
\ForAll{$v\in V\setminus\{t\}$}
    \State Assign $v$ to exactly one update worker $\operatorname{owner}(v)$
    \State $\front_v\gets\emptyset$; publish $\snap_v^0=(\emptyset,\emptyset,\emptyset,0)$
\EndFor
\State Set $\operatorname{owner}(t)$ to the dedicated goal-update worker
\State $\front_t\gets\emptyset$; publish $\snap_t^0=(\emptyset,\emptyset,\emptyset,0)$
\State Create the search workers, update workers, and the dedicated goal-update worker
\State Create $x_s=(s,\mathbf{0},\hvec(s),\nil,0,0)$
\State Append $x_s$ to the incoming queue of $\operatorname{owner}(s)$
\State Execute Algorithms~\ref{alg:search} and~\ref{alg:update} concurrently
\State Wait until every incoming and local priority queue is empty and every worker is idle
\State Stop all workers and return the representatives stored with $\front_t$
\end{algorithmic}
\end{algorithm}

Algorithm~\ref{alg:search} describes the expansion side. A search worker first merges arrivals into its private best-first queue, extracts the locally smallest label under its assigned objective order, and expands all outgoing edges. Each generated candidate is pruned against immutable endpoint and goal snapshots before it is sent directly to the update worker responsible for its endpoint.

\begin{algorithm}[!t]
\caption{Search Worker $r$}
\label{alg:search}
\begin{algorithmic}[1]
\State Let $I_r$ be the incoming queue and $Q_r$ the private best-first queue ordered by $\pi_r$
\While{the search has not terminated}
    \State Transfer every available label from $I_r$ to $Q_r$
    \If{$Q_r=\emptyset$}
        \State Wait for an arrival or termination signal; \textbf{continue}
    \EndIf
    \State Remove $x\in Q_r$ with minimum lexicographic $\fvec(x)$ under order $\pi_r$
    \State Read the current immutable goal snapshot $\snap_t^{\nu_t}$
    \ForAll{$(v(x),u)\in E$}
        \State $\gvec(y)\gets\gvec(x)+\cost(v(x),u)$; $\fvec(y)\gets\gvec(y)+\hvec(u)$
        \State Read the current immutable endpoint snapshot $\snap_u^{\nu_u}$
        \If{\textsc{SnapshotDominates}$(\snap_u^{\nu_u},\gvec(y))$}
            \State \textbf{continue}
        \EndIf
        \If{\textsc{SnapshotDominates}$(\snap_t^{\nu_t},\fvec(y))$}
            \State \textbf{continue}
        \EndIf
        \State Set $y=(u,\gvec(y),\fvec(y),x,\nu_u,\nu_t)$
        \State Append $y$ to the incoming queue of $\operatorname{owner}(u)$
    \EndFor
\EndWhile
\end{algorithmic}
\end{algorithm}

Algorithm~\ref{alg:update} gives the update side. For each candidate, the receiving worker first determines whether the search-side negative result can be reused from the carried snapshot version. If the version has changed, the current snapshot is queried again. A surviving candidate is then inserted into the exact canonical frontier, the next base--delta snapshot is published, and a non-goal label is returned to a lightly loaded search worker. The same procedure applies to the dedicated goal-update worker, except that accepted goal labels are retained as solution representatives and are not expanded again.

\begin{algorithm}[!t]
\caption{Update Worker for an Assigned Vertex Set}
\label{alg:update}
\begin{algorithmic}[1]
\State Let $J_p$ be the incoming queue and $U_p$ the private priority queue
\While{the search has not terminated}
    \State Transfer every available candidate from $J_p$ to $U_p$
    \If{$U_p=\emptyset$}
        \State Wait for an arrival or termination signal; \textbf{continue}
    \EndIf
    \State Remove a candidate $y\in U_p$ with minimum lexicographic $\fvec(y)$; let $v=v(y)$
    \State Read current snapshot $\snap_v^{\nu}$
    \State $\widehat{\nu}(y)\gets\nu_t(y)$ if $v=t$, and $\widehat{\nu}(y)\gets\nu_v(y)$ otherwise
    \If{$\widehat{\nu}(y)\ne\nu$ \textbf{and} \textsc{SnapshotDominates}$(\snap_v^{\nu},\gvec(y))$}
        \State Discard $y$; \textbf{continue}
    \EndIf
    \State $\front_v\gets\big(\front_v\setminus\{z:\gvec(y)\dom z\}\big)\cup\{\gvec(y)\}$
    \If{$\lvert\front_v\rvert<\theta_I$}
        \State $\base_v\gets$ lexicographically sorted copy of
           $\front_v$
        \State $\deltaf_v\gets\emptyset$; $\idx_v\gets\emptyset$
    \ElsIf{consolidation is due}
    \State $\base_v\gets$ lexicographically sorted copy of
           $\front_v$
    \State $\deltaf_v\gets\emptyset$
        \State Build the selected block-minimum or segment-tree-minimum index $\idx_v$ over $\base_v$
    \Else
        \State Reuse $\base_v$ and $\idx_v$; set $\deltaf_v\gets\deltaf_v\cup\{\gvec(y)\}$
    \EndIf
    \State Publish $\snap_v^{\nu+1}=(\base_v,\deltaf_v,\idx_v,\nu+1)$
    \If{$v=t$}
        \State Store $y$ as the representative of cost $\gvec(y)$
    \Else
        \State Obtain two search-worker candidates $r_1,r_2$ from hashes of $v$
        \State $r\gets r_1$ if $\operatorname{load}(r_1)\le\operatorname{load}(r_2)$, else $r\gets r_2$
        \State Append accepted label $y$ to incoming queue $I_r$
    \EndIf
\EndWhile
\end{algorithmic}
\end{algorithm}

Algorithms~\ref{alg:dom} and~\ref{alg:index} specify the exact read-only query used by both worker types. The query checks the recent delta first, restricts the ordered base to the lexicographically feasible prefix, and then applies either exact linear scanning or one of the two range-minimum indexes. The index threshold $\theta_I$ and adaptive delta capacity $\theta_\Delta(\lvert\front_v\rvert)$ affect only when a snapshot is consolidated; they do not alter the dominance predicate.

\begin{algorithm}[!t]
\caption{\textsc{SnapshotDominates}$(\snap^\nu,q)$}
\label{alg:dom}
\begin{algorithmic}[1]
\Require $\snap^\nu=(\base,\deltaf,\idx,\nu)$ and query vector $q$
\If{$\deltaf\ne\emptyset$}
    \State $m_j\gets\min_{z\in\deltaf}z_j$ for $j=1,\ldots,d$
    \If{$m_j\le q_j$ for every $j$ \textbf{and} some $z\in\deltaf$ satisfies $z\dom q$}
        \State \Return \textbf{true}
    \EndIf
\EndIf
\State By binary search, find the largest $u$ with $b_u\preceq_{\mathrm{lex}}q$
\If{$u=0$}
    \State \Return \textbf{false}
\EndIf
\If{$\idx=\emptyset$}
    \State \Return whether some $b_i\in\base[1{:}u]$ satisfies $b_i\dom q$
\Else
    \State \Return \textsc{IndexedPrefixDominates}$(\idx,\base,q,u)$
\EndIf
\end{algorithmic}
\end{algorithm}

\begin{algorithm}[!t]
\caption{\textsc{IndexedPrefixDominates}$(\idx,\base,q,u)$}
\label{alg:index}
\begin{algorithmic}[1]
\If{$\idx$ is a block-minimum index}
    \ForAll{blocks $C$ intersecting $\base[1{:}u]$}
        \If{$\min_{z\in C}z_j\le q_j$ for every $j=2,\ldots,d$}
            \If{some $z\in C\cap\base[1{:}u]$ satisfies $z\dom q$} \State \Return \textbf{true} \EndIf
        \EndIf
    \EndFor
\Else
    \State Initialize a stack with the segment-tree root and interval $[1,\lvert\base\rvert]$
    \While{the stack is not empty}
        \State Remove an indexed interval $C$; skip it if $C\cap[1,u]=\emptyset$
        \If{$\min_{z\in C}z_j>q_j$ for some $j=2,\ldots,d$} \State \textbf{continue} \EndIf
        \If{$C$ is a leaf and its vector dominates $q$} \State \Return \textbf{true} \EndIf
        \If{$C$ is not a leaf} \State Push its two child intervals \EndIf
    \EndWhile
\EndIf
\State \Return \textbf{false}
\end{algorithmic}
\end{algorithm}

Together, Algorithm~\ref{alg:main} supplies global control, Algorithms~\ref{alg:search} and~\ref{alg:update} form the asynchronous pipeline, and Algorithms~\ref{alg:dom} and~\ref{alg:index} implement exact snapshot pruning.

\subsection{Termination}

The implementation terminates on stable global quiescence rather than updating one global counter for every label transfer. A worker declares itself idle only when it has no current label, its private priority queue is empty, and its incoming queue has no published or in-progress item. The coordinator confirms that all worker states and queue cursors remain unchanged across repeated observations before issuing the stop signal. Since the system is closed after the start label is submitted, a stable state in which every worker is idle and every queue is empty cannot generate future labels. This protocol is an implementation of the abstract termination condition in Algorithm~\ref{alg:main}; it does not affect search ordering or dominance decisions.

\section{Correctness and Complexity}
\label{sec:correctness}

\begin{lemma}[Canonical-frontier invariant]
After every update-worker operation, $\front_v$ contains exactly the cost-unique nondominated vectors among all labels admitted so far for vertex $v$.
\end{lemma}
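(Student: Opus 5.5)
The proof is by induction on the sequence of update operations performed on $\front_v$. The sequence is totally ordered because only $\operatorname{owner}(v)$, or the goal-update worker when $v=t$, ever modifies $\front_v$. Let $A_v^{(k)}$ be the multiset of cost vectors $\gvec(y)$ of labels admitted for $v$ after $k$ operations, where a label counts as \emph{admitted} when it passes the dominance check in Algorithm~\ref{alg:update} and reaches the frontier-update line. Let $N(A)$ be the cost-unique set of vectors of $A$ that are not strictly dominated by any member of $A$. The claim to prove is $\front_v=N(A_v^{(k)})$ for every $k$.

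\emph{Base case.} Algorithm~\ref{alg:main} sets $\front_v=\emptyset=N(\emptyset)$. Operations that discard a candidate change neither $\front_v$ nor $A_v$, so they preserve the invariant trivially.

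\emph{Admission step.} This step carries all the content. I would first prove an auxiliary fact: whenever a candidate $y$ is admitted, no $z\in\front_v$ satisfies $z\dom\gvec(y)$. This uses three points.
\begin{enumerate}[leftmargin=*]
\item Each admission publishes a new snapshot, and the owner is the only publisher. So at the moment the owner reads $\snap_v^{\nu}$, that snapshot was built from the current $\front_v$ and satisfies the conservative-superset property $\front_v\subseteq\base_v^{\nu}\cup\deltaf_v^{\nu}$.
\item Algorithms~\ref{alg:dom} and~\ref{alg:index} return true exactly when some stored vector weakly dominates the query. The delta minimum test, the lexicographic prefix cut, and the block and segment-tree minima are only necessary conditions for dominance, and every surviving candidate vector is tested exactly.
\item In the version-mismatch case, the owner runs this exact query against $\snap_v^{\nu}$ and obtains false. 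In the version-match case, the search worker already ran the same exact query on the identical immutable object $\snap_v^{\nu}$ and obtained false, since a false result is the only way $y$ is forwarded rather than discarded.
\end{enumerate}
In both cases, combining the negative result with the superset inclusion gives the auxiliary fact.

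With the auxiliary fact, the update $\front_v\gets(\front_v\setminus\{z:\gvec(y)\dom z\})\cup\{\gvec(y)\}$ equals $N(A_v^{(k)}\cup\{\gvec(y)\})$. The argument splits into three observations.
\begin{itemize}[leftmargin=*]
\item The vector $\gvec(y)$ is nondominated in the enlarged set. Suppose some $a\in A_v^{(k)}$ strictly dominates it. Then either $a\in\front_v$, or $a$ is strictly dominated by some $a'\in\front_v$; this second alternative uses finiteness and transitivity to follow dominance chains to a member of $N$. In either case some $z\in\front_v$ satisfies $z\dom\gvec(y)$, a contradiction. The same reasoning shows that no stored vector equals $\gvec(y)$, which gives cost-uniqueness.
\item Every removed $z$ is weakly dominated by $\gvec(y)$ and differs from it, hence is strictly dominated.
\item Every retained $z$ stays nondominated, because $\gvec(y)\not\dom z$ and the other members of $A_v^{(k)}$ already failed to dominate $z$ by the induction hypothesis.
\end{itemize}

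I expect point~1 to be the main obstacle. It requires making the publication order precise: publication happens after the canonical update and before the next candidate is processed, so the snapshot the owner reads is never older than $\front_v$. It also requires arguing that equal version numbers denote the identical immutable object. Both facts follow from single-writer ownership and the monotone version increment $\nu\mapsto\nu+1$. The remaining ingredients are routine.
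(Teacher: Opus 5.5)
Your proposal is correct and follows the paper's own argument: induction over the serialized update sequence for $v$. An admitted candidate has no weak dominator in a snapshot that contains $\front_v$, so deleting the vectors it dominates and inserting it once preserves the cost-unique nondominated set. You spell out what the paper leaves implicit (the version-match case, the dominance-chain step, the publication order), and the only loose ends are routine: the start label $x_s$ was never queried by a search worker but carries version $0$, which denotes the empty snapshot; for $v=t$ the reused query was on $\fvec(y)$, which equals $\gvec(y)$ because $\hvec(t)=\mathbf{0}$; and in the delta-append branch the superset property follows from $\base_v^{\nu+1}\cup\deltaf_v^{\nu+1}=\base_v^{\nu}\cup\deltaf_v^{\nu}\cup\{\gvec(y)\}$ rather than from the snapshot being rebuilt from the current $\front_v$.
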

\begin{proof}
A candidate is admitted only if the current snapshot contains no weak dominator. The snapshot is a superset of the current canonical frontier, so no vector in $\front_v$ dominates or duplicates the candidate. The update removes every stored vector dominated by the candidate and inserts the candidate once. No incomparable vector is removed. The invariant follows by induction over the serial update sequence for $v$.
\end{proof}

\begin{lemma}[Snapshot safety]
If \textsc{SnapshotDominates}$(\snap_v^k,q)$ returns true, then $q$ is dominated by a feasible path to $v$, even if the snapshot contains a vector that is no longer in $\front_v$.
\end{lemma}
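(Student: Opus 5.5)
The plan is to split the claim into two parts. First, \textsc{SnapshotDominates} returns true only when it finds a concrete vector $z$ stored in $\base_v^k\cup\deltaf_v^k$ with $z\dom q$. Second, every such stored vector is the cost of a feasible $s$--$v$ path. Chaining the two gives a feasible path whose cost weakly dominates $q$.

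For the first part, I would walk through Algorithms~\ref{alg:dom} and~\ref{alg:index} and list every line that returns \textbf{true}. These are the delta branch, the unindexed prefix scan, a block scan, and a segment-tree leaf test. In each case the return is conditioned on an explicit exact test $z\dom q$ for some $z$ in the delta or the base. The component-minimum tests, the binary-search prefix cutoff and the interval skipping only ever cause a branch to be skipped or to return \textbf{false}. They therefore cannot produce a spurious positive, and I would not need to argue that they are complete here.

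For the second part, I would argue by induction over the sequence of publications for vertex $v$ in Algorithm~\ref{alg:update}. Two facts drive the induction:
\begin{itemize}
\item Every vector ever written into a base or delta is either a copy of a vector in $\front_v$ at consolidation time, or the vector $\gvec(y)$ of a just-admitted label $y$.
\item Every vector ever placed in $\front_v$ is $\gvec(y)$ for some admitted label $y$. Each label represents a concrete path: $y$ is built from its parent $x$ by extending along an edge $(v(x),v)$ with $\gvec(y)=\gvec(x)+\cost(v(x),v)$, and the start label $x_s$ has $\gvec=\mathbf{0}$ for the trivial path. A second induction on parent depth then shows that $\gvec(y)$ is the cost of the $s$--$v$ path obtained by following parent pointers.
\end{itemize}
Removing vectors from $\front_v$ never invalidates feasibility. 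A stale vector that lingers in an old base or delta is still the cost of the path it was derived from, since labels and snapshots are immutable, so the path and its cost remain intact.

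The main obstacle is the phrase ``even if the snapshot contains a vector that is no longer in $\front_v$''. One must be careful that the argument never relies on the superset property $\front_v\subseteq\base_v^k\cup\deltaf_v^k$, which points in the wrong direction for this claim. It relies only on the fact that every stored vector has feasible provenance. If a stronger conclusion is wanted, namely that $q$ is dominated by a vector currently in $\front_v$, I would add transitivity: by the canonical-frontier invariant stated earlier, a removed $z$ was removed because some later admitted vector $z'$ satisfies $z'\dom z$. Iterating this finitely many times reaches a current member $z^\ast\in\front_v$ with $z^\ast\dom z\dom q$.
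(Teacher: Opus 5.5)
Your proposal is correct and uses the same two facts as the paper's proof. Every snapshot vector originates from an admitted feasible label, and a vector since removed from $\front_v$ is weakly dominated by a later admitted one, so transitivity carries the dominance over. You go further than the paper in two ways: you check that every \textbf{true} branch of Algorithms~\ref{alg:dom} and~\ref{alg:index} is guarded by an exact test (the paper leaves this to the subsequent exact-query lemma), and you rightly note that feasible provenance alone gives the stated conclusion, with transitivity needed only for the stronger claim that a current member of $\front_v$ dominates $q$.
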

\begin{proof}
Every snapshot vector was produced by an admitted feasible label. If a snapshot vector $a$ has since been removed, it was removed because a later feasible vector $b$ satisfies $b\dom a$. If $a\dom q$, transitivity gives $b\dom q$. Thus, a positive result remains safe after later frontier updates.
\end{proof}

\begin{lemma}[Exact indexed query]
Algorithms~\ref{alg:dom} and~\ref{alg:index} return true if and only if some vector represented by the queried snapshot weakly dominates $q$.
\end{lemma}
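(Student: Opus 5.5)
The plan is to prove both directions, taking the set represented by the snapshot to be $\deltaf\cup\base$. \emph{Soundness} (a true return implies a weak dominator exists) will be immediate. Every \textbf{true} return in Algorithms~\ref{alg:dom} and~\ref{alg:index} comes from an explicit exact test $z\dom q$ on some $z$ in $\deltaf$ or in $\base[1{:}u]$. The summary tests only ever act as necessary-condition filters placed before these exact tests.

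\emph{Completeness} is the substantive direction. Suppose some $z\in\deltaf\cup\base$ satisfies $z\dom q$; I will show a \textbf{true} return. I split into two cases.

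\textbf{Case 1: $z\in\deltaf$.} Then $m_j\le z_j\le q_j$ for every $j$, so the guard passes. The exact scan of $\deltaf$ then finds $z$ (or an earlier dominator), and the algorithm returns \textbf{true}.

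\textbf{Case 2: $z=b_i\in\base$.} First I show $i\le u$.
\begin{itemize}
\item If $b_i=q$, this is clear.
\item Otherwise, let $k$ be the first coordinate where $b_i$ and $q$ differ. Then $b_{i,k}\le q_k$ forces $b_{i,k}<q_k$, so $b_i\preceq_{\mathrm{lex}}q$.
\end{itemize}
Because $\base$ is lexicographically sorted, binary search returns the largest such index, so $u\ge i\ge1$. In particular the early exit at $u=0$ does not fire.

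With $i\le u$ in hand, there are three sub-cases according to the index.
\begin{itemize}
\item \emph{No index.} The linear scan of $\base[1{:}u]$ finds $b_i$.
\item \emph{Block-minimum index.} The block $C$ containing $b_i$ intersects the prefix, and $\min_{w\in C}w_j\le b_{i,j}\le q_j$ for $j=2,\ldots,d$. So $C$ passes the filter and the exact scan of $C\cap\base[1{:}u]$ finds a dominator. Blocks before $C$ may return \textbf{true} earlier, which is also fine.
\item \emph{Segment-tree index.} Every node whose interval contains $i$ intersects $[1,u]$ and has all minima at most $q_j$. Hence, by induction along the root-to-leaf path to leaf $i$, each such node is pushed and never skipped. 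Since the traversal only terminates early by returning \textbf{true}, either the leaf $i$ is popped and tested, or \textbf{true} is returned earlier. Either way the output is \textbf{true}.
\end{itemize}

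Finally, the omission of coordinate $1$ from the summaries needs no separate justification for completeness. Dropping a filter condition can only let more ranges through, and $b_{i,1}\le q_1$ holds in any case.

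The main obstacle is making the segment-tree argument rigorous. The LIFO order and early exits must not cause the leaf of $b_i$ to be missed. I would handle this with the invariant that, until \textbf{true} is returned, some ancestor of leaf $i$ (or the leaf itself) is on the stack. This holds initially because the root is pushed. It is preserved because such a node is never discarded: it passes both the prefix test and the minima test, so when it is popped, its child on the path to $i$ is pushed. The stack cannot empty while the invariant holds, so \textbf{true} must be returned.

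A minor point to note is the degenerate case where a leaf lies in $C\cap[1,u]$ but the minima test is done on the full interval $C$. This is harmless, since an interval's minima are at most those of any of its sub-intervals.
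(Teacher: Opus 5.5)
Your proposal is correct and follows the same route as the paper's proof. Positive answers come only from exact $z\dom q$ tests, every base dominator lies in the lexicographic prefix $[1,u]$, and a minimum-based filter cannot discard a range that contains a dominator, so that dominator is eventually tested exactly. Your explicit stack invariant for the segment-tree traversal, and your remark that omitting the first coordinate only weakens the filter, make rigorous what the paper's short proof leaves implicit.
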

\begin{proof}
The delta is tested by exact dominance. Any base dominator must be lexicographically no larger than $q$, so it lies in the prefix $[1,u]$. A block or tree interval is skipped only if a component minimum exceeds $q$; then every vector in the interval exceeds $q$ in that component. Surviving vectors are checked exactly. Consequently, no dominator is skipped and no non-dominator can cause a positive result.
\end{proof}

\begin{lemma}[Version-reuse safety]
Suppose a search worker obtains \textsc{SnapshotDominates}$(\snap_v^k,q)=\mathrm{false}$ and the update worker later observes the same version $k$. Then no vector in the current canonical frontier $\front_v$ weakly dominates $q$.
\end{lemma}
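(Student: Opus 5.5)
The argument rests on three facts. Snapshots are immutable and identified by their version. Only one worker publishes snapshots for $v$. And the conservative-superset property holds: $\front_v\subseteq\base_v^k\cup\deltaf_v^k$ at the moment version $k$ is current.

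First, I would show that version numbers determine snapshot contents uniquely. The only worker that ever publishes a snapshot for $v$ is $\operatorname{owner}(v)$ (or the goal-update worker when $v=t$). In Algorithm~\ref{alg:update} it always publishes $\snap_v^{\nu+1}$ from the current $\nu$, so versions increase strictly and are never reused. Published snapshots are immutable. Hence the snapshot the search worker read under version $k$ and the snapshot current when the update worker observes version $k$ are the same object, $(\base_v^k,\deltaf_v^k,\idx_v^k,k)$.

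Second, I would show that the canonical frontier has not changed since version $k$ became current. Every modification of $\front_v$ in Algorithm~\ref{alg:update} is followed, in the same serial step of the same worker, by publication of a new version. Because the update worker is the sole writer and processes candidates serially, the check on the carried version happens between such steps. Observing the current version still equal to $k$ therefore means that no admission to $\front_v$ has occurred since $\snap_v^k$ was published. So $\front_v$ is exactly the frontier from which $\snap_v^k$ was produced, and the conservative-superset property gives $\front_v\subseteq\base_v^k\cup\deltaf_v^k$. I expect this step to be the main obstacle. It needs the atomicity claim that the modification of $\front_v$ and the publication of the next version form one indivisible step from the owner's point of view. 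I would argue it from single ownership: the only thread that could interleave is the owner itself, and it does not. I would also check the base-only case ($\lvert\front_v\rvert<\theta_I$) and the consolidation case, where the base is a sorted copy of $\front_v$, as well as the delta-append case, where the old base plus delta already contained the previous frontier.

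Third, I would invoke the exact indexed query lemma. Since \textsc{SnapshotDominates}$(\snap_v^k,q)$ returned false, no vector of $\base_v^k\cup\deltaf_v^k$ weakly dominates $q$. Together with the inclusion from the second step, no vector of $\front_v$ weakly dominates $q$, which proves the lemma. I would close with a remark that the same argument applies verbatim to the goal frontier with $\nu_t$. For the goal frontier, the relevant query vector is $\gvec(y)=\fvec(y)$, because $\hvec(t)=\mathbf{0}$.
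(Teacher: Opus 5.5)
Your proposal is correct and follows the paper's argument. Version equality identifies the same immutable snapshot, the exact-query lemma makes the negative result hold for every represented vector, and the inclusion $\front_v\subseteq\base_v^k\cup\deltaf_v^k$ carries it over to the canonical frontier. Your additional steps (strictly increasing versions from a single publisher, the owner-serial update-then-publish step, the check of the three publication branches, and the $\hvec(t)=\mathbf{0}$ remark for the goal frontier) make explicit what the paper's short proof takes for granted.
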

\begin{proof}
The version equality identifies the same immutable snapshot. By the exact-query lemma, no represented vector dominates $q$. Since $\front_v\subseteq\base_v^k\cup\deltaf_v^k$, no canonical vector dominates $q$ either.
\end{proof}

\begin{lemma}[Goal pruning]
If a goal vector $z\in\front_t$ satisfies $z\dom\fvec(x)$, then no completion of label $x$ can produce a new cost-unique Pareto-optimal goal vector.
\end{lemma}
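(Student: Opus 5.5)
The plan is to take an arbitrary completion of $x$ and bound its cost from below by $\fvec(x)$ using admissibility of $\hvec$. Then transitivity of $\dom$ and the fact that $z$ is the cost of a real $s$--$t$ path will show the completion's cost is either strictly dominated or equal to a vector already in $\front_t$. In neither case can it be a new cost-unique Pareto-optimal goal vector.

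Let $\sigma$ be any $v(x)$--$t$ path, and let $x\cdot\sigma$ be the $s$--$t$ path obtained by concatenating the path represented by $x$ with $\sigma$. By additivity, $\cost(x\cdot\sigma)=\gvec(x)+\cost(\sigma)$. Each $h_i(v(x))$ is the exact single-objective distance from $v(x)$ to $t$ under $c_i$. Hence $h_i(v(x))\le c_i(\sigma)$ for every $i$, i.e.\ $\hvec(v(x))\dom\cost(\sigma)$. Adding $\gvec(x)$ componentwise gives $\fvec(x)\dom\cost(x\cdot\sigma)$. Combining this with $z\dom\fvec(x)$ yields $z\dom\cost(x\cdot\sigma)$ by transitivity.

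Next I would use that $z\in\front_t$ is the cost of a feasible $s$--$t$ path. This holds because only admitted goal labels, which represent concrete paths, enter $\front_t$; this is covered by the canonical-frontier invariant lemma. I would then split into two cases. If $z\ne\cost(x\cdot\sigma)$, then $z$ strictly dominates $\cost(x\cdot\sigma)$, so that cost vector is not Pareto-optimal. If $z=\cost(x\cdot\sigma)$, the cost vector is already represented in $\front_t$, so under the cost-unique convention (duplicates represented once) the completion contributes nothing new. If $z$ is later removed from $\front_t$, it is removed only by some $z'\dom z$. The same argument then applies with $z'$, as in the snapshot-safety lemma, so the conclusion is stable over time. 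The same reasoning covers pruning against a possibly stale goal snapshot.

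There is no real obstacle; the argument is essentially routine. The only point needing care is the equality case. Here "no new cost-unique vector" rather than "strictly dominated" is exactly what is needed, and the weak-dominance test $z\dom\fvec(x)$ is justified precisely because equal goal costs are deduplicated.
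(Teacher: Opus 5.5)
Your proof is correct and follows the same route as the paper: admissibility of $\hvec$ gives $\fvec(x)\dom\cost(\pi)$ for every completion $\pi$, and transitivity then yields $z\dom\cost(\pi)$. The paper's proof stops at that chain. Your derivation of admissibility from the exact reverse-Dijkstra distances, the strict-versus-equal case split, and the remark about later removal of $z$ spell out what the paper leaves implicit.
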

\begin{proof}
Admissibility gives $\fvec(x)\dom\cost(\pi)$ in the minimization sense that every feasible completion cost is component-wise at least $\fvec(x)$. Therefore $z\dom\fvec(x)\dom\cost(\pi)$ for every completion $\pi$ of $x$.
\end{proof}

\begin{theorem}[Exactness]
Suppose every queued label is eventually processed unless it is safely pruned, and termination is declared only at stable global quiescence. If \algname{} terminates without a timeout, the goal frontier $\front_t$ contains exactly one path representative for every Pareto-optimal $s$--$t$ cost vector.
\end{theorem}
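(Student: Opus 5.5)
The proof has two directions, soundness and completeness; uniqueness of representatives comes from Lemma~1 applied to $t$.

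\emph{Soundness.} Every vector in $\front_t$ is $\gvec(y)$ for an admitted label $y$, and $y$ encodes a concrete $s$--$t$ path through its parent chain. So every returned vector is feasible. By Lemma~1 applied to $t$, $\front_t$ is cost-unique and contains no vector strictly dominated by another admitted goal vector. It remains to show that no returned vector is strictly dominated by some Pareto-optimal cost $c^\ast$. This will follow from completeness: $c^\ast$ would itself be in $\front_t$, and Lemma~1 forbids a strictly dominated vector from coexisting with it.

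\emph{Completeness.} Fix a Pareto-optimal cost $c^\ast$ realized by a path $\pi^\ast=(s=v_0,\dots,v_m=t)$. I will choose $\pi^\ast$ with the minimum number of edges among paths of cost $c^\ast$, so that every prefix is Pareto-optimal to its endpoint, up to ties. For each $i$, let $g_i$ be the cost of the prefix to $v_i$. I will prove the invariant that, at termination, $\front_{v_i}$ contains some $a_i\dom g_i$ whose label was expanded, for $i<m$. The argument is an induction on $i$, using a ``replacement'' chain in the style of standard label-setting proofs.

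The key step is the following. Suppose some admitted label $x$ at $v_i$ has $\gvec(x)\dom g_i$. Since $x$ is non-goal and admitted, it is sent to a search worker. By the theorem's processing hypothesis, it is eventually expanded unless it is safely pruned. Expanding along $(v_i,v_{i+1})$ yields a candidate $y$ with $\gvec(y)\dom g_{i+1}$. Candidate $y$ is discarded in only three ways. First, by an endpoint snapshot hit; Lemmas~2 and~3 then give a feasible vector at $v_{i+1}$ weakly dominating $\gvec(y)$. Second, by an update-worker revalidation; Lemmas~3 and~4 give the same conclusion. Third, by goal pruning. For goal pruning, Lemma~5 together with admissibility gives $z\dom\fvec(y)\dom c^\ast$ for some goal vector $z$. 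Pareto-optimality of $c^\ast$ then forces $z=c^\ast$, so $c^\ast$ is already represented.

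In the first two cases, a weakly dominating vector was admitted at $v_{i+1}$. It may later be removed from $\front_{v_{i+1}}$, but only by an even smaller admitted vector (Lemma~1 and transitivity). Because the graph is finite and costs are nonnegative, the dominating chain stabilizes. At quiescence, $\front_{v_{i+1}}$ therefore holds some admitted label weakly dominating $g_{i+1}$, and that label was expanded. The start label supplies the base case $\mathbf 0\dom g_0$.

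At $i=m$, the goal frontier holds some $a\dom c^\ast$. Pareto-optimality gives $a=c^\ast$, and Lemma~1 gives exactly one representative.

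\emph{Main obstacle.} The hardest step is the claim ``an admitted dominating label at $v_i$ is eventually expanded.'' A label is expanded after being forwarded, but it may be removed from $\front_{v_i}$ before or after expansion, and Algorithm~\ref{alg:search} does not re-check membership. So it is in fact always expanded, but I must argue that quiescence implies every forwarded label was popped. For this I will use the stable-quiescence hypothesis: all queues are empty, so every appended label was removed and processed. I also need a finiteness argument that the replacement chain cannot descend forever. My first attempt is to note that admitted vectors are costs of simple-or-not paths, but dominance removal makes the set of admitted cost vectors at each vertex a strictly descending antichain sequence. I plan to bound admissions by the finite number of Pareto-relevant path costs, perhaps assuming termination (which the theorem grants) to sidestep this.
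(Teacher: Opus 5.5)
Your proof is correct, but the completeness step rests on a different invariant from the paper's.

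\textbf{Comparison with the paper.} The paper carries \emph{exact} prefix costs. It shows that at each $v_i$ a label of cost exactly $\gvec_i$, or an equal-cost representative, is admitted. The key fact is that anything strictly dominating $\gvec_{i+1}$ at $v_{i+1}$ would, after appending the remaining suffix of $P$, give an $s$--$t$ path strictly dominating $P$. So each discard case is either impossible or already supplies an equal-cost representative. You instead carry only weak dominance $a_i\dom g_i$, in the classical dominating-label style. You invoke Pareto-optimality only at the goal, and in the goal-pruning case via $z\dom\fvec(y)\dom c^\ast$.

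What each buys: your route needs no subpath-optimality argument at all. The paper's route is tighter, because with exact costs the carried label could be removed from a frontier only by a strict dominator, which is impossible. So no persistence or finiteness discussion arises there.

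\textbf{Parts you can drop.}
\begin{itemize}
\item Choosing $\pi^\ast$ with the fewest edges buys nothing. Every prefix of any Pareto-optimal path is already nondominated by the suffix-appending argument above, and you never use exactness of prefixes anyway.
\item You do not need the dominating label to survive in $\front_{v_{i+1}}$ until termination, so the descending-chain stabilization is unnecessary. As you observe, Algorithm~\ref{alg:update} forwards every admitted non-goal label, and Algorithm~\ref{alg:search} expands it without re-checking membership. A snapshot vector $a\dom\gvec(y)$ is the cost of some admitted label at $v_{i+1}$, and under stable quiescence that label is expanded regardless of later removal. The induction hypothesis can therefore simply be: ``some admitted label at $v_i$ has cost $\dom g_i$, or $c^\ast$ has already been admitted to $\front_t$.''
\item If you keep the chain argument, your justification is sound. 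Walk costs bounded componentwise form a finite set, because each edge with nonzero cost has bounded multiplicity.
\end{itemize}

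With the simplification, you still need persistence at $t$. This is immediate: a goal vector equal to $c^\ast$ can be removed only by an admitted strict dominator of $c^\ast$, which contradicts Pareto-optimality.

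\textbf{Citation fix.} The revalidation-rejection case is justified by Lemmas 2--3, not Lemma 4. Lemma 4 concerns reuse of negative results, and it matters for the soundness of admissions underlying Lemma 1, not for completeness.
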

\begin{proof}
Every label is generated by extending the initial label along graph edges and therefore represents a feasible path. The goal-update worker applies an exact Pareto insertion operation: it rejects a goal cost weakly dominated by an existing one, removes existing costs strictly dominated by a newly admitted one, and retains one representative for duplicate cost vectors. Hence $\front_t$ is always cost-unique and nondominated among the admitted goal labels.

For completeness, fix a Pareto-optimal path $P=\langle v_0=s,v_1,\ldots,v_m=t\rangle$ and let $\gvec_i$ be the cost of its prefix ending at $v_i$. We show by induction that, for every $i$, the algorithm admits a label at $v_i$ with cost $\gvec_i$, or already contains an equivalent-cost representative. The claim holds for the initial label at $s$.

Assume it holds at $v_i$. Fair processing eventually expands the corresponding label and generates a candidate for $v_{i+1}$ with cost $\gvec_{i+1}$. If an endpoint-frontier query rejects this candidate, some feasible label at $v_{i+1}$ weakly dominates $\gvec_{i+1}$. Strict dominance would, after appending the remaining suffix of $P$, produce an $s$--$t$ path that strictly dominates $P$, contradicting Pareto optimality. Equality merely provides the required equivalent-cost representative.

The candidate also cannot be incorrectly removed by goal pruning. Admissibility gives $\fvec_{i+1}\dom\cost(P)$. Thus, any admitted goal cost that dominates $\fvec_{i+1}$ also weakly dominates $\cost(P)$. This either contradicts Pareto optimality or shows that the same goal cost is already represented in $\front_t$.

A positive snapshot result is safe because every vector in a published snapshot originates from a feasible admitted label. A negative result is reused only when the snapshot version observed by the search worker is still current; otherwise, the update worker repeats the dominance query against the latest snapshot. Therefore the candidate, or an equivalent-cost representative, is admitted. The induction reaches $v_m=t$, so every Pareto-optimal cost vector is admitted to $\front_t$.

Finally, suppose a globally dominated cost remained in $\front_t$ at termination. It is dominated by some Pareto-optimal $s$--$t$ cost, which has been admitted by the completeness argument; the exact goal-frontier update would therefore have rejected or removed the dominated cost, a contradiction. Stable global quiescence guarantees that no queued, active, or in-flight candidate remains unprocessed. Consequently, $\front_t$ contains exactly one representative for every Pareto-optimal $s$--$t$ cost vector.
\end{proof}

\subsection{Complexity}

Let $n=\lvert\base\rvert$, $\delta=\lvert\deltaf\rvert$, $u$ be the feasible prefix length, and $d$ the objective count. The delta test is $O(d\delta)$ in the worst case, while the lexicographic binary search is $O(d\log n)$. With no index, the base query is $O(du)$. For block size $b$, \algbm{} tests $\lceil u/b\rceil$ summaries and exactly scans only the blocks that survive; if $s$ vectors lie in surviving blocks, the cost is $O((d-1)\lceil u/b\rceil+ds)$. \algst{} visits only segment-tree nodes that intersect the prefix and are not eliminated by interval minima. If $z$ index nodes and $\ell$ leaves are visited, the cost is $O((d-1)z+d\ell)$. In arbitrary dimension, both indexes retain $O(du)$ worst-case query time because summary minima from different dimensions may be attained by different vectors. Their benefit is data-dependent range elimination, not a logarithmic worst-case bound.

A consolidation copies and sorts the canonical frontier and builds an $O(dn)$ index. The base--delta policy avoids performing this work after every insertion. Between consolidations, publication copies only the small delta and shares the immutable base. Endpoint ownership serializes updates to a single vertex but permits independent vertices to be updated in parallel. The search itself remains output sensitive: no exact method can avoid costs proportional to a potentially exponential Pareto frontier.

\section{Experimental Evaluation}
\label{sec:evaluation}

\subsection{Experimental Setup}

Experiments ran on a server with a 56-core Intel Xeon Gold processor at 2.60~GHz, 112 hardware threads, 512~GB RAM, and Ubuntu 22.04.5 LTS. All C++ programs were compiled with the GNU \texttt{g++} driver using \texttt{-O3}. Every run retained parent references for accepted labels so that complete solution paths could be reconstructed. We report main-search time and use the operating-system peak resident set size as the memory measure; below, this quantity is referred to simply as memory. The timeout was one hour per query.

We compare SIP-MOSP with four state-of-the-art exact MOSP baselines: LTMOA*, NWMOA*, Parallel LTMOA*, and Parallel NWMOA*. All four use the source code released with Parallelizing Multi-objective A* Search~\cite{Ahmadi2025}.\footnote{\url{https://bitbucket.org/s-ahmadi/multiobj/src/main/}} The released code reports an internal memory estimate rather than operating-system peak memory. We added the same peak-memory measurement to all implementations and otherwise left the baseline algorithms unchanged. The parallel baselines run one search per cyclic objective order and therefore use $d$ search threads. The principal \algname{} experiments use 96 workers, divided into search/update roles as 40/56 on NYC-Road, 72/24 on AS3356, and 88/8 on Dense-180; the dedicated goal updater is included in the update count. These splits were selected empirically from short pilot runs on representative queries and then fixed for every reported query of the corresponding topology. Search workers use four cyclic order groups on NYC-Road and ten groups on AS3356 and Dense-180.

Each setting uses 40 start--goal pairs, except for the five-query Dense-180 ablation subset. For each main comparison, mean time and memory are computed over the mutually solved instances, that is, the queries completed by every algorithm included in that comparison. This avoids treating a timeout as a one-hour numeric runtime or averaging different methods over query subsets of unequal difficulty. All speedup factors are computed from the unrounded mean solve times over the same mutually solved instances; the values shown in the tables are rounded only for presentation. The ``Solved'' column reports the total number of queries completed by each method. For Dense-180 with 20 objectives, the mean comparison includes only the four parallel methods because the sequential methods exceed the one-hour limit on most instances.

\begin{table}[!t]
\centering
\caption{Evaluation topologies.}
\label{tab:topologies}
\footnotesize
\setlength{\tabcolsep}{3.4pt}
\begin{tabular}{lrrl}
\toprule
Topology & Vertices & Directed arcs & Objectives \\
\midrule
NYC-Road & 264,346 & 733,846 & 3, 4 \\
AS3356 & 624 & 10,596 & 10, 20 \\
Dense-180 & 180 & 32,220 & 10, 20 \\
\bottomrule
\end{tabular}
\end{table}

NYC-Road is the New York road network from the 9th DIMACS Implementation Challenge~\cite{DIMACS2009}. We use the benchmark cost components defined by Peer \emph{et al.}~\cite{Peer2026}. The first two objectives are distance and travel time. The third objective in our experiments is the mean out-degree of the edge endpoints, which reflects how strongly an edge is connected to surrounding roads. The fourth objective assigns unit cost to every edge, so a path's fourth component counts the number of traversed road segments. Thus, the three-objective setting uses distance, travel time, and endpoint connectivity, while the four-objective setting additionally counts traversed links. We use the same first 40 start--goal pairs distributed with the benchmark suite.\footnote{\url{https://github.com/CRL-Technion/Multi-Objective-Search-Benchmarks/tree/main/benchmarks/road/DIMACS}}

AS3356 is a Rocketfuel ISP topology~\cite{Spring2002}. Dense-180 is the complete directed graph used to stress frontier growth under high density. For AS3356 and Dense-180, each objective component is an independently assigned integer edge cost in $[0,100]$, following the supplied benchmark construction. The same graph, objective count, and query pairs are used for all algorithms in a setting.

\subsection{Main Results}

Tables~\ref{tab:nyc-main}--\ref{tab:dense-main} report solve time and memory together for each topology. A value in the ``Solved'' column is the number of queries completed within one hour. For each objective setting, mean time and memory are computed only over the queries completed by every algorithm included in that column group.

\begin{table*}[!t]
\centering
\caption{NYC-Road results.}
\label{tab:nyc-main}
\scriptsize
\setlength{\tabcolsep}{3.6pt}
\begin{tabular}{lrrr|rrr}
\toprule
& \multicolumn{3}{c|}{3 objectives (avg. $|\mathrm{Sols}|=7{,}280$)} & \multicolumn{3}{c}{4 objectives (avg. $|\mathrm{Sols}|=16{,}057$)} \\
Method & Solved & Time (s) & Mem. (GB) & Solved & Time (s) & Mem. (GB) \\
\midrule
LTMOA*       & 40 & 41.35 & 0.251 & 39 & 145.77 & 0.336 \\
Parallel LTMOA* & 40 & 10.05 & 0.442 & 40 & 39.83 & 0.835 \\
NWMOA*       & 40 & 9.88 & \textbf{0.244} & 40 & 40.58 & \textbf{0.320} \\
Parallel NWMOA* & 40 & 6.33 & 0.408 & 40 & 20.29 & 0.737 \\
\algbm{}     & 40 & \textbf{4.78} & 0.865 & 40 & \textbf{19.57} & 1.582 \\
\algst{}     & 40 & 6.09 & 1.262 & 40 & 26.53 & 2.559 \\
\bottomrule
\end{tabular}
\end{table*}

\begin{table*}[!t]
\centering
\caption{AS3356 results.}
\label{tab:as-main}
\scriptsize
\setlength{\tabcolsep}{3.6pt}
\begin{tabular}{lrrr|rrr}
\toprule
& \multicolumn{3}{c|}{10 objectives (avg. $|\mathrm{Sols}|=3{,}915$)} & \multicolumn{3}{c}{20 objectives (avg. $|\mathrm{Sols}|=26{,}108$)} \\
Method & Solved & Time (s) & Mem. (GB) & Solved & Time (s) & Mem. (GB) \\
\midrule
LTMOA*       & 40 & 11.36 & 0.116 & 38 & 382.15 & 1.114 \\
Parallel LTMOA* & 40 & 1.47 & 0.659 & 40 & 33.05 & 11.049 \\
NWMOA*       & 40 & 5.98 & \textbf{0.100} & 38 & 225.48 & 1.028 \\
Parallel NWMOA* & 40 & 1.51 & 0.616 & 40 & 20.74 & 10.738 \\
\algbm{}     & 40 & 0.29 & 0.104 & 40 & 8.20 & \textbf{0.503} \\
\algst{}     & 40 & \textbf{0.26} & 0.146 & 40 & \textbf{4.81} & 0.987 \\
\bottomrule
\end{tabular}
\end{table*}

\begin{table*}[!t]
\centering
\caption{Dense-180 results. Sequential methods are omitted from the 20-objective means because most queries exceed one hour.}
\label{tab:dense-main}
\scriptsize
\setlength{\tabcolsep}{3.6pt}
\begin{tabular}{lrrr|rrr}
\toprule
& \multicolumn{3}{c|}{10 objectives (avg. $|\mathrm{Sols}|=8{,}453$)} & \multicolumn{3}{c}{20 objectives (avg. $|\mathrm{Sols}|=96{,}035$)} \\
Method & Solved & Time (s) & Mem. (GB) & Solved & Time (s) & Mem. (GB) \\
\midrule
LTMOA*       & 40 & 114.55 & 1.150 & -- & -- & -- \\
Parallel LTMOA* & 40 & 8.22 & 6.633 & 40 & 229.63 & 167.231 \\
NWMOA*       & 40 & 40.07 & 0.969 & -- & -- & -- \\
Parallel NWMOA* & 40 & 9.18 & 5.986 & 40 & 183.19 & 163.244 \\
\algbm{}     & 40 & \textbf{1.17} & \textbf{0.117} & 40 & 100.16 & \textbf{1.421} \\
\algst{}     & 40 & 1.18 & 0.182 & 40 & \textbf{54.82} & 2.705 \\
\bottomrule
\end{tabular}
\end{table*}

Benchmark diversity is important because graph structure and objective construction can change branching, frontier growth, and the relative cost of search and dominance processing~\cite{Peer2026}. Across the three topology families, the same SIP-MOSP architecture remains effective rather than relying on a road-network-specific behavior. The higher-dimensional AS3356 and Dense-180 results are especially notable: the proposed methods improve both solve time and memory, showing that parallelism is obtained without replicating complete objective-order search states.

Both \algbm{} and \algst{} outperform LTMOA* and NWMOA* on the mutually solved instances in every setting for which a sequential comparison is available. NWMOA* has the lower mean solve time of the two sequential baselines in each such setting. We therefore use NWMOA* as the more conservative sequential reference below; the corresponding speedups over LTMOA* are larger.

In the three- and four-objective NYC-Road settings, \algbm{} achieves the lowest mean solve time. With three objectives, its mean time is 4.78~s, compared with 41.35~s for LTMOA*, 9.88~s for NWMOA*, 10.05~s for Parallel LTMOA*, and 6.33~s for Parallel NWMOA*. These results correspond to speedups of 8.65$\times$, 2.07$\times$, 2.10$\times$, and 1.32$\times$, respectively. With four objectives, \algbm{} requires 19.57~s, whereas the same four baselines require 145.77, 40.58, 39.83, and 20.29~s, yielding speedups of 7.45$\times$, 2.07$\times$, 2.04$\times$, and 1.04$\times$, respectively. Thus, although the advantage over Parallel NWMOA*, the best-performing parallel baseline, is narrow in the four-objective setting, the improvements over both sequential baselines and Parallel LTMOA* remain substantial. The sequential and objective-order parallel baselines use less memory in these low-dimensional road settings. \algbm{}, the better-performing \algname{} backend on NYC-Road, uses 0.865 and 1.582~GB in the three- and four-objective settings, respectively, whereas the four baselines use 0.244--0.442 and 0.320--0.835~GB. The difference is consistent with the larger worker pool, the additional concurrently maintained search state, and the potentially longer start--goal paths in this much larger-vertex road network. Thus, the memory benefit of sharing frontier state is less pronounced in these low-dimensional settings, but becomes substantial in the higher-dimensional workloads, where replicating large Pareto frontiers is considerably more costly.

On AS3356, SIP-MOSP shows a much larger runtime advantage and substantial memory savings relative to the parallel baselines. On the mutually solved instances, \algst{} is 23.0$\times$ and 46.9$\times$ faster than NWMOA* with ten and twenty objectives, respectively. Relative to Parallel LTMOA* in the 10-objective setting and Parallel NWMOA* in the 20-objective setting, the corresponding speedups are 5.65$\times$ and 4.31$\times$. In the 20-objective setting, Parallel NWMOA* uses 10.738~GB of memory, 10.9$\times$ the 0.987~GB used by \algst{}. \algbm{} uses only 0.503~GB, making the block-minimum backend the more memory-efficient alternative in this setting, although the segment-tree backend is faster. 

Dense-180 further amplifies both search effort and memory replication. With ten objectives, \algbm{} is 34.2$\times$ faster than NWMOA*, the best-performing sequential baseline, and 7.05$\times$ faster than Parallel LTMOA*, the best-performing parallel baseline, on the mutually solved instances. Parallel LTMOA* uses 6.633~GB of memory, 56.7$\times$ the 0.117~GB used by \algbm{}. With twenty objectives, the sequential methods are excluded from the mean-time comparison because most instances remain unsolved by them within the one-hour limit. \algst{} is 3.34$\times$ faster than Parallel NWMOA*. Parallel NWMOA* uses 163.244~GB of memory, 60.3$\times$ the 2.705~GB used by \algst{}. A key structural difference is that the objective-order parallel methods maintain multiple complete search states, whereas \algname{} maintains one canonical frontier per vertex, including the goal, within a single cooperative search. This difference is consistent with the large memory gaps observed in the high-dimensional settings.

\subsection{Component Ablations}
\label{subsec:ablation}

We evaluate five modifications for each index backend: one objective-order group, a centralized dispatcher instead of direct owner delivery, eager monolithic indexed snapshots instead of base--delta snapshots, no base index, and no version reuse. In the eager monolithic variant, every accepted label causes the complete canonical frontier to be copied into one consolidated snapshot and the applicable index to be rebuilt immediately; the delta is always empty. The full method is also shown. Each row is averaged over the queries completed by that variant. Hence a variant with no timeout uses all 40 road or AS3356 queries, whereas a variant with one timeout uses the remaining 39. The relative factor compares the variant with the full method on exactly that variant's solved subset, so unrelated variants do not discard queries because another variant timed out. Dense-180 twenty-objective runtimes are comparatively uniform, but each ablation run is long; we therefore use five random pairs for this component study.

\begin{table*}[!t]
\centering
\caption{NYC-Road, four objectives: component results for both index backends.}
\label{tab:ablation-ny}
\scriptsize
\begin{minipage}{0.49\textwidth}
\centering
\textbf{(a) \algbm{}}\\[2pt]
\begin{tabular}{lrrrr}
\toprule
Variant & $N$ & Time (s) & Rel. & Memory (GB) \\
\midrule
Full & 40 & 48.23 & 1.00 & 2.881 \\
One order group & 40 & 106.91 & 2.22 & 2.567 \\
Central dispatcher & 40 & 108.08 & 2.24 & 7.378 \\
Eager monolithic snapshot & 39 & 88.15 & 4.50 & 3.237 \\
No block-minimum index & 40 & 98.28 & 2.04 & 2.699 \\
No version reuse & 40 & 51.18 & 1.06 & 2.852 \\
\bottomrule
\end{tabular}
\end{minipage}\hfill
\begin{minipage}{0.49\textwidth}
\centering
\textbf{(b) \algst{}}\\[2pt]
\begin{tabular}{lrrrr}
\toprule
Variant & $N$ & Time (s) & Rel. & Memory (GB) \\
\midrule
Full & 40 & 62.98 & 1.00 & 4.686 \\
One order group & 34 & 64.93 & 2.55 & 2.269 \\
Central dispatcher & 40 & 124.04 & 1.97 & 11.458 \\
Eager monolithic snapshot & 38 & 145.42 & 9.96 & 5.881 \\
No segment-tree index & 40 & 102.22 & 1.62 & 2.694 \\
No version reuse & 40 & 69.19 & 1.10 & 4.668 \\
\bottomrule
\end{tabular}
\end{minipage}
\end{table*}

\begin{table*}[!t]
\centering
\caption{AS3356, twenty objectives: component results for both index backends.}
\label{tab:ablation-as}
\scriptsize
\begin{minipage}{0.49\textwidth}
\centering
\textbf{(a) \algbm{}}\\[2pt]
\begin{tabular}{lrrrr}
\toprule
Variant & $N$ & Time (s) & Rel. & Memory (GB) \\
\midrule
Full & 40 & 19.17 & 1.00 & 0.700 \\
One order group & 40 & 23.78 & 1.24 & 0.677 \\
Central dispatcher & 40 & 19.86 & 1.04 & 0.708 \\
Eager monolithic snapshot & 40 & 94.98 & 4.95 & 4.306 \\
No block-minimum index & 40 & 30.57 & 1.59 & 0.687 \\
No version reuse & 40 & 20.51 & 1.07 & 0.707 \\
\bottomrule
\end{tabular}
\end{minipage}\hfill
\begin{minipage}{0.49\textwidth}
\centering
\textbf{(b) \algst{}}\\[2pt]
\begin{tabular}{lrrrr}
\toprule
Variant & $N$ & Time (s) & Rel. & Memory (GB) \\
\midrule
Full & 40 & 9.78 & 1.00 & 1.391 \\
One order group & 40 & 10.39 & 1.06 & 1.380 \\
Central dispatcher & 40 & 9.92 & 1.01 & 1.401 \\
Eager monolithic snapshot & 39 & 106.82 & 16.87 & 12.435 \\
No segment-tree index & 40 & 30.02 & 3.07 & 0.694 \\
No version reuse & 40 & 9.88 & 1.01 & 1.385 \\
\bottomrule
\end{tabular}
\end{minipage}
\end{table*}

\begin{table*}[!t]
\centering
\caption{Dense-180, twenty objectives: component results on the five-query ablation subset.}
\label{tab:ablation-dense}
\scriptsize
\begin{minipage}{0.49\textwidth}
\centering
\textbf{(a) \algbm{}}\\[2pt]
\begin{tabular}{lrrrr}
\toprule
Variant & $N$ & Time (s) & Rel. & Memory (GB) \\
\midrule
Full & 5 & 75.15 & 1.00 & 1.258 \\
One order group & 5 & 87.31 & 1.16 & 1.204 \\
Central dispatcher & 5 & 74.83 & 1.00 & 1.286 \\
Eager monolithic snapshot & 5 & 221.91 & 2.95 & 4.886 \\
No block-minimum index & 5 & 91.77 & 1.22 & 1.262 \\
No version reuse & 5 & 75.84 & 1.01 & 1.273 \\
\bottomrule
\end{tabular}
\end{minipage}\hfill
\begin{minipage}{0.49\textwidth}
\centering
\textbf{(b) \algst{}}\\[2pt]
\begin{tabular}{lrrrr}
\toprule
Variant & $N$ & Time (s) & Rel. & Memory (GB) \\
\midrule
Full & 5 & 43.09 & 1.00 & 2.502 \\
One order group & 5 & 46.37 & 1.08 & 2.436 \\
Central dispatcher & 5 & 42.75 & 0.99 & 2.514 \\
Eager monolithic snapshot & 5 & 455.02 & 10.56 & 15.596 \\
No segment-tree index & 5 & 89.61 & 2.08 & 1.267 \\
No version reuse & 5 & 43.30 & 1.00 & 2.511 \\
\bottomrule
\end{tabular}
\end{minipage}
\end{table*}

The most consistent result is the value of base--delta publication. On AS3356 and Dense-180 with twenty objectives, eager monolithic reconstruction is 2.95--4.95$\times$ slower for the block backend and 10.56--16.87$\times$ slower for the segment-tree backend. The penalty is larger for the tree because each accepted label forces a complete interval hierarchy to be rebuilt. Eager reconstruction also raises memory substantially, indicating that repeated full copies and delayed reclamation increase the number and size of simultaneously live snapshots.

The indexes are likewise important. Removing block minima slows the three settings by 1.22--2.04$\times$. Removing the segment tree slows them by 1.62--3.07$\times$. The stronger high-dimensional effect is consistent with longer feasible prefixes and the increased value of range elimination. The no-index variants sometimes use less memory than \algst{}, but the runtime penalty shows that the extra summaries are worthwhile when frontiers are large.

Version reuse has a smaller but measurable effect. The block backend slows by approximately 1--7\%, and the segment-tree backend by approximately 0.5--10\%. Reuse saves a complete second snapshot traversal only when the candidate label reaches its update worker before another accepted label changes that snapshot's version. Longer queue delay or a higher update rate makes a version change more likely, so the update worker performs exact revalidation more often and the benefit decreases. Direct owner delivery is highly important on NYC-Road, where a centralized dispatcher nearly doubles runtime and sharply increases memory, but it is close to neutral on AS3356 and the five-query Dense-180 subset. This topology dependence suggests that the centralized stage becomes harmful when candidate arrival rates create queue contention or delay publication of pruning information. Objective-order diversity has its largest effect on NYC-Road, but it is also material for \algbm{} on AS3356, where one order group is 1.24$\times$ slower, and on Dense-180, where it is 1.16$\times$ slower.

\subsection{Thread Scaling}
\label{subsec:scaling}

We evaluate both index backends on Dense-180 with ten objectives over the same 40 queries while increasing the two worker pools proportionally. The configurations are $8/8$, $16/16$, $24/24$, $32/32$, $40/40$, and $48/48$ search/update workers, for 16--96 workers in total; the update count includes the dedicated goal updater. This balanced series measures end-to-end scaling of the complete two-stage architecture. Because both roles and the vertex-to-update-worker partition change together, it is a system-level scaling study rather than an isolation of either stage.

\begin{table*}[!t]
\centering
\caption{Thread scaling on Dense-180 with ten objectives over 40 queries. Speedup is relative to the corresponding 16-worker result.}
\label{tab:scaling}
\footnotesize
\setlength{\tabcolsep}{3.4pt}
\begin{tabular}{rrrrrrrr}
\toprule
Total workers & Search/Update &
\multicolumn{3}{c}{\algbm{}} &
\multicolumn{3}{c}{\algst{}} \\
& & Time (s) & Speedup & Memory (GB) & Time (s) & Speedup & Memory (GB) \\
\midrule
16 & 8/8   & 7.363 & 1.00 & 0.080 & 7.680 & 1.00 & 0.142 \\
32 & 16/16 & 3.793 & 1.94 & 0.097 & 3.932 & 1.95 & 0.175 \\
48 & 24/24 & 2.594 & 2.84 & 0.112 & 2.689 & 2.86 & 0.202 \\
64 & 32/32 & 1.960 & 3.76 & 0.127 & 2.039 & 3.77 & 0.231 \\
80 & 40/40 & 1.664 & 4.43 & 0.144 & 1.739 & 4.42 & 0.259 \\
96 & 48/48 & 1.501 & 4.90 & 0.165 & 1.587 & 4.84 & 0.295 \\
\bottomrule
\end{tabular}
\end{table*}

\begin{figure*}[!t]
    \centering
    \begin{minipage}{0.49\textwidth}
        \centering
        \includegraphics[width=\linewidth]{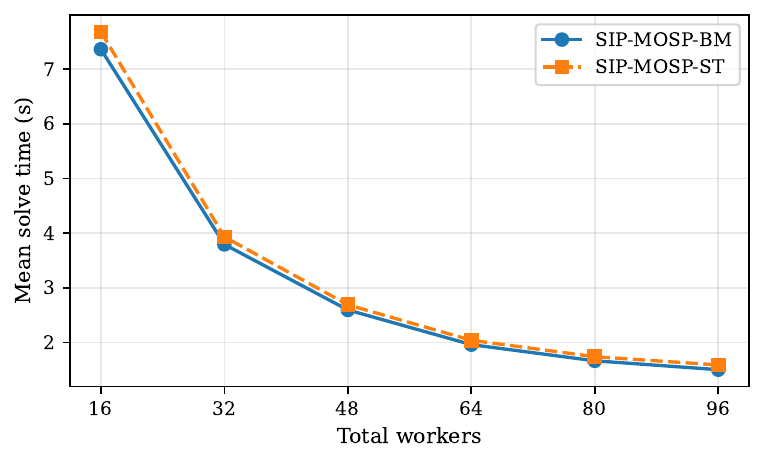}
    \end{minipage}\hfill
    \begin{minipage}{0.49\textwidth}
        \centering
        \includegraphics[width=\linewidth]{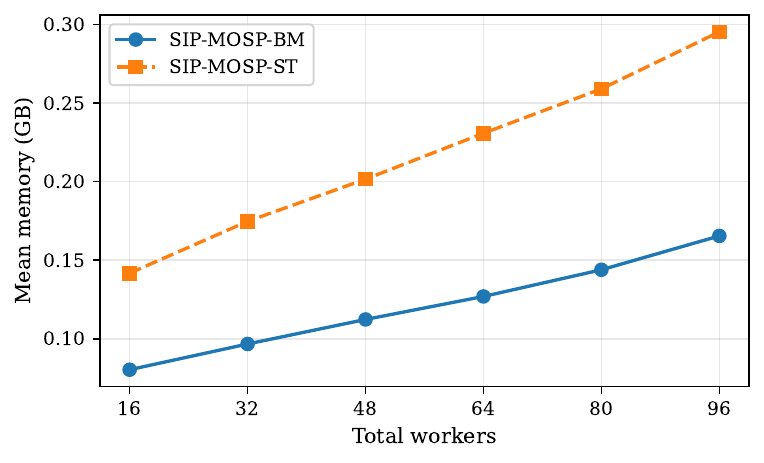}
    \end{minipage}
    \caption{Thread scaling of both SIP-MOSP variants on Dense-180 with ten objectives. Search and update workers are increased in equal numbers.}
    \label{fig:scaling}
\end{figure*}

Both variants improve monotonically as additional workers are introduced. From 16 to 96 workers, \algbm{} decreases from 7.363 to 1.501~s, a 4.90$\times$ speedup, and \algst{} decreases from 7.680 to 1.587~s, a 4.84$\times$ speedup. Mean memory grows from 0.080 to 0.165~GB for \algbm{} and from 0.142 to 0.295~GB for \algst{}, approximately 2.06$\times$ and 2.08$\times$, respectively. Thus, a sixfold increase in workers yields nearly a fivefold reduction in solve time while memory roughly doubles rather than increasing in proportion to the worker count. Gains become smaller beyond 64 workers, reflecting scheduling, communication, and frontier-update overheads as well as the changing partition among update workers. The experiment is limited to this representative Dense-180 workload and is intended to demonstrate that both backends can exploit substantially more workers than the ten objective orders alone would provide.

\subsection{Illustrative CPU-Utilization Trace}

\noindent\begin{minipage}{\columnwidth}
Fig.~\ref{fig:cpu} reports equivalent busy cores for one Dense-180 twenty-objective query, where one unit denotes one fully busy logical core. Table~\ref{tab:cpu} summarizes the same process-group traces.
\end{minipage}

\begin{figure*}[!t]
    \centering
    \includegraphics[width=0.83\textwidth]{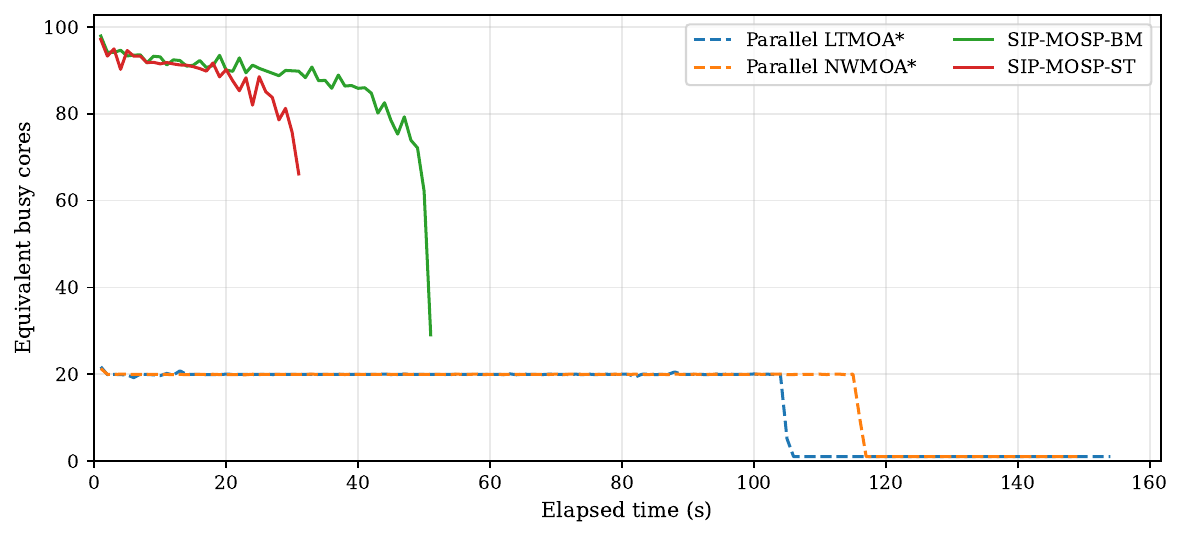}
    \caption{Equivalent busy CPU cores for one Dense-180 twenty-objective query, measured over each algorithm's complete process group.}
    \label{fig:cpu}
\end{figure*}

\begin{table}[!t]
\centering
\caption{Summary of the illustrative CPU-utilization trace.}
\label{tab:cpu}
\footnotesize
\begin{tabular}{lrr}
\toprule
Algorithm & Runtime (s) & Mean busy cores \\
\midrule
Parallel LTMOA* & 154.6 & 13.81 \\
Parallel NWMOA* & 149.4 & 15.69 \\
\algbm{} & 51.8 & 86.09 \\
\algst{} & 31.9 & 86.60 \\
\bottomrule
\end{tabular}
\end{table}

The traces contrast the two parallelization models on a dense, twenty-objective workload. Parallel LTMOA* and parallel NWMOA* use one search per objective order. They begin near their 20 available search threads, but utilization falls sharply as individual searches finish and the remaining runtime is determined by a small number of unfinished searches. \algname{} instead distributes label expansion and vertex-frontier maintenance within one cooperative search. Both variants sustain approximately 86 busy cores on average and avoid the long low-utilization tail, while completing substantially earlier. This example shows that \algname{} can sustain high parallel activity throughout a difficult high-dimensional run, whereas the objective-order baselines lose active searches as their independent runs finish. The multi-query tables provide the aggregate runtime and memory evidence.

\subsection{Cross-Paper Comparison with OPMOS and MPMOS}

Table~\ref{tab:gpu} compares supplied results on the same five NYC-Road four-objective queries used by OPMOS and MPMOS. The sequential column is NAMOA*. SIP-MOSP-BM and OPMOS both have a 72-thread result, and we additionally report SIP-MOSP-BM with 96 workers. All methods use the same topology and query identifiers. The implementations and machines were evaluated separately, so the table is not an identical-hardware experiment; nevertheless, the shared benchmark and the matched 72-thread CPU result permit an informative comparison of the efficiency achieved by the published designs. Parentheses give the reported speedup over NAMOA*. The geometric mean is retained because the original comparison summarizes multiplicative speedups across queries.

\begin{table*}[!t]
\centering
\caption{Cross-paper NYC-Road four-objective times in seconds. Parentheses give speedup over NAMOA*.}
\label{tab:gpu}
\scriptsize
\setlength{\tabcolsep}{3.5pt}
\begin{tabular}{lrrrrr}
\toprule
Query & NAMOA* & OPMOS~\cite{Gold2025} & MPMOS~\cite{Gold2026MPMOS} & \algbm{} (72) & \algbm{} (96) \\
\midrule
NY6  & 3,788.1 & 1,124.0 (3.4$\times$) & 5.190 (720$\times$) & 4.621 (820$\times$) & \textbf{2.875} (1,318$\times$) \\
NY15 & 2.94 & 0.800 (3.7$\times$) & \textbf{0.134} (21.9$\times$) & 0.174 (16.9$\times$) & 0.161 (18.3$\times$) \\
NY20 & 399.1 & 128.1 (3.1$\times$) & 1.155 (346$\times$) & 1.028 (388$\times$) & \textbf{0.873} (457$\times$) \\
NY31 & 3,041.0 & 226.4 (13$\times$) & \textbf{3.346} (909$\times$) & 9.108 (334$\times$) & 8.258 (368$\times$) \\
NY50 & 39.26 & 8.551 (4.6$\times$) & 0.316 (124$\times$) & 0.333 (118$\times$) & \textbf{0.309} (127$\times$) \\
\midrule
Geometric mean & 221.2 & 46.7 (4.7$\times$) & \textbf{0.968} (229$\times$) & 1.202 (184$\times$) & 1.006 (220$\times$) \\
\bottomrule
\end{tabular}
\end{table*}

The matched 72-thread comparison reduces the geometric-mean time from 46.7~s for OPMOS to 1.202~s for \algbm{}, a 38.9$\times$ reduction. Increasing SIP-MOSP-BM to 96 workers lowers the geometric mean further to 1.006~s, 46.4$\times$ below OPMOS and only about 4\% above the 0.968~s reported for GPU-based MPMOS. The 96-worker CPU result is faster than MPMOS on three of the five queries. These cross-paper data therefore reinforce the efficiency of the proposed organization: the shared-memory CPU implementation substantially improves on the reported OPMOS times and reaches the same practical performance range as the massively parallel GPU design on this subset. Because the systems were evaluated separately, the table complements rather than replaces the controlled comparisons in Tables~\ref{tab:nyc-main}--\ref{tab:dense-main}.

\subsection{Discussion}

Parallel exact MOSP remains a recognized research challenge. Salzman \emph{et al.}~\cite{Salzman2023Survey} list parallelization among the open challenges and research opportunities for heuristic MOSP. The later survey of multi-objective search~\cite{Salzman2026Emerging} likewise notes that parallel MOS research has remained largely unexplored and that most existing algorithms scale poorly beyond two or three objectives. Only a small number of exact methods have reported practical parallel evaluation, and recent empirical studies have often centered on road-network benchmarks. In this context, our results show that one cooperative shared-memory architecture can operate effectively on a road graph, an ISP topology, and a complete directed graph, including high-dimensional settings in which both runtime and memory become limiting factors.

\paragraph{Sources of scalability}
The runtime and CPU traces show that the main benefit is not simply adding threads. Assigning each vertex to one update worker prevents concurrent writes to the same frontier, while different workers can update unrelated vertex frontiers simultaneously. Immutable snapshots let search workers execute dominance pruning without acquiring read locks on mutable frontiers. The incoming-queue/private-priority-queue pair separates concurrent delivery from ordered local processing, and direct owner delivery avoids an additional dequeue, ownership lookup, and enqueue at a centralized dispatcher. Together, these mechanisms shorten the interval between discovery of a useful vector and publication of the pruning information it creates.

\paragraph{Why base--delta snapshots matter}
Exact search creates a feedback loop: accepting a vector changes a frontier, and the changed frontier should quickly prune other labels. Rebuilding a complete indexed snapshot after every accepted vector preserves freshness but makes publication proportional to the full frontier size. Delaying publication reduces construction cost but weakens pruning. The base--delta organization resolves this tension by publishing each recent vector in a small delta while retaining an immutable indexed base. The ablations show that this is not a minor implementation detail. Eager monolithic reconstruction causes the largest and most consistent slowdowns, especially for the segment-tree backend, whose complete index contains substantially more summary state.

\paragraph{Choosing between block and segment-tree minima}
The two backends occupy different points in the time--memory trade-off. Block minima use fewer summaries and sequential memory, which is advantageous on NYC-Road and the ten-objective Dense-180 setting. Segment-tree minima offer finer interval elimination and become fastest on AS3356 and Dense-180 with twenty objectives. The no-index ablations show that the advantage is not merely low-level tuning: removing either index increases runtime, and the segment-tree penalty reaches 3.07$\times$ on AS3356. A practical extension is an adaptive backend that selects blocks or a tree from measured frontier size, feasible-prefix length, index hit rate, and memory pressure.

\paragraph{Shared search state and memory behavior}
Parallel LTMOA* and NWMOA* execute multiple complete objective-order searches, which duplicates open lists and node-frontier state. \algname{} uses objective orders only as local scheduling policies inside one cooperative search. The high-dimensional memory reductions, 10.9$\times$ on AS3356 and 60.3$\times$ on Dense-180 relative to parallel NWMOA*, follow from this structural difference. NYC-Road behaves differently. Its low-dimensional frontiers are comparatively compact, while start--goal routes can traverse many intermediate vertices. The asynchronous pipeline may therefore keep many accepted labels, incoming entries, local queue entries, and path-reconstruction records live at once. When frontier replication is inexpensive, this concurrency and snapshot metadata can outweigh the memory saved by sharing one search state. The method's strongest memory advantage consequently appears when high-dimensional Pareto state, rather than path depth alone, dominates storage.

\paragraph{Topology sensitivity and scheduling}
The dispatcher and order-group ablations are topology dependent. Direct owner delivery is decisive on NYC-Road but nearly neutral on AS3356 and the five-query Dense-180 subset. A centralized dispatcher becomes costly when many search workers simultaneously generate candidates and the extra queueing delay postpones useful frontier updates. Objective-order diversity is beneficial when one lexicographic order expands an unfavorable region of the search space; the 1.24$\times$ AS3356 slowdown for one order group shows that this effect is not confined to road networks. Version reuse depends on a different timing relation. It benefits labels that reach their update worker before the relevant snapshot changes, so its gain reflects queue delay, update frequency, and dominance-query cost rather than topology alone.

\paragraph{Relation to other parallel models}
Objective-order portfolios and SIP-MOSP expose complementary forms of parallelism. Portfolios obtain independent best-first searches whose progress can diversify early solution discovery, but their primary concurrency is the number of active orders and their state is replicated. SIP-MOSP shares canonical frontiers and distributes labels and vertex-local updates within one search. OPMOS and MPMOS instead rely on ordered accelerator-oriented execution. The results suggest that shared-memory CPUs can exploit substantial parallelism without either replicating complete searches or abandoning exact frontier admission.

\paragraph{Limits and future directions}
The current implementation assumes nonnegative additive edge costs so that reverse Dijkstra lower bounds are available. The snapshot and ownership architecture is independent of this choice, and a bounded negative-weight extension could use the lower-bound machinery of NWMOA*. The present vertex assignment is static; NUMA-aware graph partitioning, cost-aware ownership, or migration of heavily loaded vertex groups may improve locality and balance. The snapshot indexes are also static within a consolidated base. Future work includes adaptive block size, automatic backend selection, vectorized scans inside surviving blocks, topology-aware order groups, and GPU or heterogeneous snapshot queries.

\section{Conclusion}
\label{sec:conclusion}

This work develops an exact shared-memory MOSP framework around a separation of responsibilities. Search workers preserve local best-first processing and generate candidates, while update workers provide exclusive admission and Pareto maintenance for assigned vertex frontiers. Immutable versioned snapshots connect the two stages: they expose current pruning information without allowing search threads to access mutable frontier state.

The main algorithmic contribution combines base--delta publication, exact indexed pruning, direct owner delivery, and version-aware validation. Block-minimum and segment-tree-minimum summaries remove ranges that cannot contain a dominator, while exact comparisons preserve semantics. The delta keeps newly accepted vectors visible without rebuilding a large index after every update, and the version field determines when a search-side negative query can be reused safely. These mechanisms retain exact cost-unique output under asynchronous scheduling.

The experiments show that the organization is effective across road, ISP, and dense graph settings. It improves on the state-of-the-art sequential and parallel exact baselines in all reported main settings, provides particularly large time and memory gains at high objective counts, and maintains high processor utilization. The ablations identify base--delta publication and indexed pruning as the most consistent contributors, while direct delivery, order diversity, and version reuse respond to workload structure. Overall, SIP-MOSP provides a practical route to scaling one exact cooperative search beyond the number of objectives without replicating the complete Pareto search state.

\bibliographystyle{IEEEtran}
\bibliography{references}

\end{document}